\numberwithin{equation}{section}%
\newcommand{\Z}{\mathbb{Z}}
\newcommand{\C}{\mathbb{C}}
\newcommand{\R}{\mathbb{R}}
\DeclareMathOperator{\E}{\mathbb{E}}
\renewcommand{\i}{\mathbf{i}}
\DeclareMathOperator{\Prob}{\mathsf{Prob}}
\newcommand{\al}{\alpha}
\newcommand{\la}{\lambda}
\newcommand{\be}{\beta}
\DeclareMathOperator*{\Res}{\mathrm{Res}}
\newcommand{\SP}{{{s}}}
\newcommand{\ip}{\xi} 			
\newcommand{\HT}{\mathfrak{h}}
\newcommand{\Sym}{\ensuremath{\mathrm{Sym}}}
\newcommand{\Y}{\ensuremath{\mathbb{Y}}}
\newcommand{\MM}{\ensuremath{\mathbf{MM}}}
\newcommand{\vv}{\ensuremath{\mathbf{6v}}}
\newtheorem{proposition}{Proposition}[section]
\newtheorem{corollary}[proposition]{Corollary}
\newtheorem{theorem}[proposition]{Theorem}
\theoremstyle{definition}
\newtheorem{definition}[proposition]{Definition}
\newtheorem{remark}[proposition]{Remark}
\newtheorem{example}[proposition]{Example}
\begin{document}

\title{Stochastic higher spin six vertex model and Madconald measures}

\author[A. Borodin]{Alexei Borodin}
\address{Department of Mathematics,
Massachusetts Institute of Technology,
77 Massachusetts ave.,
Cambridge, MA 02139, USA\newline
Institute for Information Transmission Problems, Bolshoy Karetny per. 19, Moscow, 127994, Russia}
\email{borodin@math.mit.edu}



\begin{abstract} We prove an identity that relates the q-Laplace transform of the height function of a (higher spin inhomogeneous) stochastic six vertex model in a quadrant on one side, and a multiplicative functional of a Macdonald measure on the other. The identity is used to prove the GUE Tracy-Widom asymptotics for two instances of the stochastic six vertex model via asymptotics analysis of the corresponding Schur measures. 
\end{abstract}

\maketitle

\setcounter{tocdepth}{2}
\tableofcontents
\setcounter{tocdepth}{2}

\section{Introduction} The last two decades have seen a sharp increase in the number of \emph{integrable}, or exactly solvable probabilistic systems. Among others, two fairly general algebraic mechanisms of producing such systems were suggested --- the Macdonald processes \cite{BC}, \cite{BP-lectures}, and the higher spin stochastic six vertex models \cite{CP}, \cite{BP-inhom}, \cite{BP-hom}. The class of Macdonald processes includes an important earlier subclass of the Schur processes \cite{OkR}, \cite{BG-lectures}. 

While there are many similarities between these two mechanisms (they both rely on symmetric functions, both generate interacting particle systems and two-dimensional Markov chains that generalize those, both provide explicit evaluations of averages for broad classes of observables), they appear to be different at the moment.\footnote{Some hope for their unification under a common roof stems from a recent work \cite{GGM}.} 

The goal of this note is to exhibit an identity that relates the q-Laplace transform of the height function of a stochastic six vertex model at a point on one side, and a multiplicative functional of a Macdonald measure on the other; it is stated as Corollary \ref{cor:match} below. The identity involves one free parameter, and comparing Taylor series in this parameter gives countably many identities involving the corresponding moments for the two sides, see Theorem \ref{th:match} below. 

The algebraic nature of this identity remains mysterious to the author. The proof is a direct comparison of previously derived integral representations for both sides. However, the fact that the identity holds with all the essential parameters of the models in the game suggest that there should be a more conceptual proof; it would be very interesting to find one. 

To my best knowledge, the first nontrivial trace of our identity is a result of \cite{ACQ, CDR, D, SS} that can be phrased in the following way: The Laplace transform of the solution of the KPZ (Kardar-Parisi-Zhang) equation with narrow wedge initial data coincides with the expectation of a multiplicative functional of the Airy determinantal random point process, see \cite{BG} for details on this formulation as well as a companion identity of moments. The KPZ-Airy identity can be obtained as a limit of the one from the present work. 

In \cite{IS}, the KPZ-Airy identity was lifted to the level of the O'Connell-Yor semi-discrete Brownian directed polymer. Unfortunately, the associated determinantal point process was not governed by a positive measure.\footnote{Another representation of the Laplace transform of the O'Connell-Yor partition function as the average of a multiplicative functional over a signed determinantal point process can be found in \cite{OC}.} Still, taking the edge limit of this process, the authors were able to recover the KPZ-Airy identity. The O'Connell-Yor polymer can be obtained via an analytic continuation and degeneration of the stochastic six vertex model, cf. \cite[Section 6]{BP-inhom} for a degeneration to the $q$-TASEP, and \cite{BC}, \cite{BP-lectures} for a further degeneration to polymers. Thus, it is possible that our identity would degenerate to the one in \cite{IS}. However, we do not pursue this here as we stay in the realm of the positive measures.

Instead, we focus on how the new identity can be used for asymptotic analysis. 

We show, in Section \ref{sc:equivalence} below, that our identity implies a certain \emph{asymptotic equivalence} of the height function for the stochastic six vertex on one side, and the length (that is, the number of the nonzero parts) of the Macdonald-random partition on the other. One striking consequence is that the asymptotic behavior of the length does not depend on the $(q,t)$-parameters of the Macdonald polynomials for a class of Macdonald measures, see Corollary \ref{cr:equiv-macd} for an exact formulation. One instance of this fact is a recent result \cite[Theorem 1.3]{Dim}, cf. a discussion in \S1.4 there. 

In the same spirit, we prove that in a certain special, yet still fairly general situation, one can replace the Macdonald measures by the Schur measures. The asymptotic analysis of the Schur measures is well developed, see e.g. \cite{BG-lectures} and references therein. An application of this analysis allows us to obtain the result of \cite{BCG} on the GUE Tracy-Widom asymptotics for the height function of the stochastic six vertex model, and also a similar (and new) result for an instance of the higher spin six vertex model, see Section \ref{sc:asymptotics} below.  

The stochastic six vertex has a natural degeneration to the asymmetric simple exclusion process  (ASEP, for short), so it is natural to ask what happens to our identity under such a limit. 
The answer is not entirely trivial and leads outside the class of Macdonald measures; it is presented in \cite{BO} along with other similar results. Another application of the new identity, to Baik-Ben Arous-P\'ech\'e like phase transitions in the stochastic six vertex model, can be found in \cite[Appendix B]{AB}. 

The discussion of this work is related to one-point distributions of generally speaking two-dimensional random fields, and it is natural to ask if any many-point extension exists. At least in one case the answer appears to be positive, but one needs to restrict the class of Macdonald processes to the Hall-Littlewood ones; this will be addressed in \cite{BB}. 

\smallskip

\noindent\textbf{Acknowldegements.}\ I am very grateful to Alexey Bufetov, Ivan Corwin, Vadim Gorin, Gri\-gori Olshanski, and Leonid Petrov for many very helpful discussions. This work was partially supported by the NSF grants DMS-1056390 and DMS-1607901.

\section{Stochastic higher spin six vertex model in a quadrant}\label{sc:6v}

Our exposition in this section largely follows \cite{BP-inhom}.

Consider an ensemble $\mathcal P$ of infinite oriented up-right paths drawn in the first quadrant $\Z_{\ge 1}^2$ of the square lattice, with all the paths starting from a left-to-right arrow entering each of the points $\{(1,m):m\in\Z_{\ge 1}\}$ on the left boundary (no path enters through the bottom boundary). Assume that no two paths share any horizontal piece (but common vertices and vertical pieces are allowed). See Figure \ref{fig:intro}.

\begin{figure}[htb]
	\begin{tikzpicture}
		[scale=.7,thick]
		\def\d{.1}
		\foreach \xxx in {1,...,6}
		{
		\draw[dotted, opacity=.4] (\xxx-1,5.5)--++(0,-5);
		\node[below] at (\xxx-1,.5) {$\xxx$};
		}
		\foreach \xxx in {1,2,3,4,5}
		{
		\draw[dotted, opacity=.4] (0,\xxx)--++(5.5,0);
		\node[left] at (-1,\xxx) {$\xxx$};
		\draw[->, line width=1.7pt] (-1,\xxx)--++(.5,0);
		}
		\draw[->, line width=1.7pt] (-1,5)--++(1-3*\d,0)--++(\d,\d)--++(0,1-\d);
		\draw[->, line width=1.7pt] (-1,4)--++(1-2*\d,0)--++(\d,\d)--++(0,1-2*\d)--++(\d,2*\d)--++(0,1-\d);
		\draw[->, line width=1.7pt] (-1,3)--++(1-\d,0)--++(\d,\d)--++(0,1-2*\d)--++(\d,2*\d)--++(0,1-2*\d)--++(\d,2*\d)--++(0,1-\d);
		\draw[->, line width=1.7pt] (-1,2)--++(1,0)--++(0,1-\d)--++(\d,\d)--++(1-\d,0)
		--++(0,1)--++(2-\d,0)--++(\d,\d)--++(0,2-\d);
		\draw[->, line width=1.7pt]
		(-1,1)--++(3,0)--++(0,2)--++(1,0)--++(0,1-\d)--++(\d,\d)--++(1-\d,0)
		--++(0,2);
		\draw[densely dashed] (-.5,4.5)--++(4,-4) node[above,anchor=west,yshift=16,xshift=-9] {$x+y=5$};
	\end{tikzpicture}
	\caption{A path collection $\mathcal P$.}
	\label{fig:intro}
\end{figure}
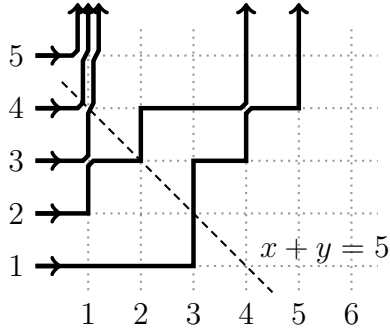

Define a probability measure on the set of such path ensembles in the following Markovian way. For any $n\ge 2$, assume that we already have a probability distribution on the intersections $\mathcal P_n$ of $\mathcal P$ with the triangle $T_n=\{(x,y)\in \Z_{\ge 1}^2: x+y\le n\}$. We are going to increase $n$ by 1. For each point $(x,y)$ on the upper boundary of $T_n$, i.e., for $x+y=n$, every $\mathcal P_n$ supplies us with two inputs: (1) The number of paths that enter $(x,y)$ from the bottom --- denote it by $i_1\in\Z_{\ge 0}$; (2) The number of paths that enter $(x,y)$ from the left --- denote it $j_1\in\{0,1\}$. Now choose, independently for all $(x,y)$ on the upper boundary of $T_n$, the number of paths $i_2$ that leave $(x,y)$ in the upward direction, and the number of paths $j_2$ that leave $(x,y)$ in the rightward direction, using the probability distribution with weights
of the transitions $(i_1,j_1)\to (i_2,j_2)$ given by
\begin{align}\label{intro-weights}
	\begin{array}{rclrcl}
		\Prob((i_1,0)\to (i_2,0))=&\dfrac{1-Q^{i_1} \SP_x \ip_x u_y}{1-\SP_x \ip_x u_y}\,\mathbf 1_{i_1=i_2},\\
		\rule{0pt}{22pt}
		\Prob((i_1,0)\to (i_2,1))=&\dfrac{(Q^{i_1}-1)\SP_x \ip_x u_y}{1-\SP_x \ip_x u_y}\,\mathbf 1_{i_1=i_2+1},\\
		\rule{0pt}{22pt}
		\Prob((i_1,1)\to (i_2,1))=&\dfrac{Q^{i_1} \SP_x^{2}-\SP_x \ip_x u_y}{1-\SP_x \ip_x u_y}\,\,\mathbf 1_{i_1=i_2},\\
		\rule{0pt}{22pt}
		\Prob((i_1,1)\to (i_2,0))=&\dfrac{1-Q^{i_1} \SP_x^{2}}{1-\SP_x \ip_x u_y}\,\mathbf 1_{i_1=i_2-1}.
	\end{array}
\end{align}

Assuming that all above expressions are nonnegative, this procedure defines a probability measure on the set of all $\mathcal P$'s because we always have $\sum_{i_2,j_2} \Prob((i_1,j_1)\to (i_2,j_2))=1$, and $\Prob((i_1,j_1)\to (i_2,j_2))$ vanishes unless $i_1+j_1=i_2+j_2$.

To ensure the nonnegativity of the right-hand sides of \eqref{intro-weights}, we will use the following assumptions on the parameters:
\begin{itemize}
\item  $0<Q<1$;
\item $\xi_x,u_y>0$ {  for all  } $x,y\ge 1$;
\item for any $x\ge 1$, either $s_x=Q^{-m/2}$ for some $m=1,2,\dots$, or $s_x\in (-1,0)$;
\item $\xi_xu_y>s_x$ for any $x,y\ge 1$ (this is trivially satisfied if $s_x\in(-1,0)$).  
\end{itemize}

Observe that if $s_x=Q^{-m/2}$ then $\Prob((m,1)\to (m+1,0))=0$, which means that no more than $m$ paths can share the same vertical piece in the column located at $x$. The case of $m=1$ (that is, no two paths can share an edge) corresponds to the stochastic six vertex model introduced in \cite{GwaSpohn} and recently studied in \cite{BCG}.

Each path ensemble $\mathcal P$ can be encoded by a \emph{height function}
$\mathfrak h:\Z_{\ge 1}\times\Z_{\ge 1} \to \Z_{\ge 0}$, that assigns to each 
vertex $(M,N)$ the number $\mathfrak h(M,N)$ of paths in $\mathcal P$ that pass 
through or to the right of this vertex.
The value $\HT(M,N)$ clearly depends only on the behavior of the paths in the 
$(M-1) \times N$ rectangle formed by $(M-1)$ first columns and $N$ first rows. 
It is convenient to introduce a notation for the data that describes 
$\mathcal P$ in such a rectangle. 

\begin{definition}\label{df:6v-spec} A collection 
$\mathcal S^{\vv} = \bigl(M,N,\{s_x\}_{x=1}^{M-1}, 
\{\xi_x\}_{x=1}^{M-1},\{u_y\}_{y=1}^N\bigr)$
is called a \emph{specification} of the stochastic higher spin six vertex in a quadrant with parameter $Q\in(0,1)$, if $M,N\ge 1$, and the $s,\xi$, and $u$ parameters satisfy the above nonnegativity conditions.
\end{definition}

\begin{proposition}\cite[Lemma 9.11]{BP-inhom} Take $Q\in (0,1)$ and assume 
that we are given a specification $\mathcal S^{\vv}$ in the sense of Definition 
\ref{df:6v-spec}. Then for any $\ell\ge 1$

\begin{multline}\label{intro-multi_moments}
		\E \prod_{i=1}^{\ell}
		\big(Q^{\HT(M,N)}-Q^{i-1}\big)
		=
		Q^{\frac{\ell(\ell-1)}2}
		\oint\frac{d w_1}{2\pi\i}
		\ldots
		\oint\frac{d w_\ell}{2\pi\i}
		\prod_{1\le a<b\le 
\ell}\frac{w_a-w_b}{w_a-Qw_b}
		\\\times
		\prod_{i=1}^{\ell}\bigg(
		w_i^{-1}
		\prod_{x=1}^{M-1}
		\frac{1-\SP_x\xi_x^{-1}w_i}{1-\SP_x^{-1}\xi_x^{-1}w_i}
		\prod_{y=1}^{N}\frac{1-Qu_yw_i}{1-u_yw_i}
		\bigg),
	\end{multline}
where the integration contours are sufficiently small positively oriented 
simple curves that encircle the points $w_*=u_y^{-1}$, $y=1,\dots,N$. 
\end{proposition}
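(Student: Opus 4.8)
The plan is to deduce \eqref{intro-multi_moments} from the theory of symmetric rational functions underlying the higher spin six vertex model that is developed in the earlier part of \cite{BP-inhom}; this is the same route that produces $q$-moment formulas for $q$-TASEP and its relatives.

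\emph{Step 1: from paths to a measure on signatures.} Observe the ensemble $\mathcal P$ along the boundary of the $(M-1)\times N$ rectangle spanned by the first $M-1$ columns and $N$ rows. The configuration there is encoded by a random signature $\lambda=(\lambda_1\ge\lambda_2\ge\cdots)$ recording, with multiplicity, the columns in which paths leave the rectangle through its top, and $\HT(M,N)$ is an explicit monotone linear statistic of $\lambda$ — concretely, the number of parts of $\lambda$ that are $\ge M$. Thus $Q^{\HT(M,N)}$, and hence the combination $\prod_{i=1}^{\ell}(Q^{\HT(M,N)}-Q^{i-1})$, is a concrete function of $\lambda$. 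By the column-transfer-matrix construction and the stochasticity (sum-to-one) relations recalled in Section~\ref{sc:6v} (see \cite{BP-inhom}), the law of $\lambda$ is the higher spin analogue of the Hall--Littlewood measure,
\[
\Prob(\lambda)\;=\;\frac1{\Pi}\,\mathsf F_\lambda\bigl(\xi_1,\dots,\xi_{M-1}\bigr)\,\mathsf G_\lambda\bigl(u_1,\dots,u_N\bigr),
\]
where $\mathsf F_\lambda,\mathsf G_\lambda$ are the symmetric rational functions of \cite{BP-inhom} (carrying also the $s$-parameters), and $\Pi=\sum_\lambda\mathsf F_\lambda\mathsf G_\lambda$ is the normalization furnished by the Cauchy identity.

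\emph{Step 2: action of difference operators.} The functions $\mathsf F_\lambda(\cdot)$ are joint eigenfunctions of a commuting family of $q$-difference operators $\mathscr D_1,\mathscr D_2,\dots$ in the variables $u_1,\dots,u_N$ — the higher spin counterparts of Macdonald's operators — whose eigenvalues are symmetric functions of $\{Q^{\lambda_j}\}_j$. One can assemble from them a single operator $\mathscr D$ acting on $\mathsf F_\lambda$ with eigenvalue exactly $\prod_{i=1}^{\ell}(Q^{\HT(M,N)}-Q^{i-1})$ up to the scalar $Q^{\ell(\ell-1)/2}$ of \eqref{intro-multi_moments} — this is where the shifts $Q^{i-1}$ and that prefactor are forced by the identification of $\HT(M,N)$ from Step~1. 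Applying $\mathscr D$ to the Cauchy identity $\sum_\lambda\mathsf F_\lambda\mathsf G_\lambda=\Pi$ and using Step~1 converts the left-hand side into $\Pi\cdot\E\prod_{i=1}^{\ell}(Q^{\HT(M,N)}-Q^{i-1})$, while the right-hand side becomes $\mathscr D\,\Pi$, which is computable in closed form since $\Pi$ factors over $u_1,\dots,u_N$.

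\emph{Step 3: residue calculus.} It remains to evaluate $\mathscr D\,\Pi$. Each of the $\ell$ layers of $\mathscr D$ introduces an auxiliary variable $w_i$ and a contour integral collecting the poles at $w_i=u_y^{-1}$, $y=1,\dots,N$; the one-variable factor
\[
w_i^{-1}\prod_{x=1}^{M-1}\frac{1-s_x\xi_x^{-1}w_i}{1-s_x^{-1}\xi_x^{-1}w_i}\prod_{y=1}^{N}\frac{1-Qu_yw_i}{1-u_yw_i}
\]
is precisely the eigenvalue kernel of a single $\mathscr D$ paired against $\Pi$, and the cross-term $\prod_{a<b}\tfrac{w_a-w_b}{w_a-Qw_b}$ together with the ordered/nested nature of the contours comes from the non-commutativity seen when successive layers act on residues already produced (equivalently, from the quadratic exchange relations among the $\mathscr D_k$). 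Dividing by $\Pi$ gives \eqref{intro-multi_moments}. The main obstacle is making Step~3 rigorous: one must justify the iterated residue extraction so that in the end all contours may be taken to be a single small circle around $\{u_y^{-1}\}$, and check that no unwanted poles are crossed — in particular the poles at $w_a=Qw_b$ (which should yield the nested structure, not extra residues) and the poles at $w_i=s_x^{-1}\xi_x^{-1}$ from the $\mathsf F$-variables, which must remain outside. A clean alternative bypassing Steps~2--3 is via Markov duality: exhibit $\prod_{i=1}^{\ell}(Q^{\HT(M,N)}-Q^{i-1})$ as a duality functional, so its expectation solves a closed linear $q$-difference system (free evolution with one two-body boundary condition), then verify by direct substitution that the right-hand side of \eqref{intro-multi_moments} solves that system with the correct boundary data and conclude by uniqueness. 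I would present the symmetric-functions derivation as the main argument and use the duality computation as an independent check of the answer.
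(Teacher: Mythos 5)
First, a point of reference: the paper does not prove this proposition at all --- it is imported verbatim as \cite[Lemma 9.11]{BP-inhom}, and the only argument the paper adds is the remark after the statement about which poles lie inside the contours. So there is no internal proof to match your attempt against; what you have written is a reconstruction of the derivation in the cited reference, and it should be judged on whether it would actually establish \eqref{intro-multi_moments}.

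As a program it points in the right direction --- the identification of $\HT(M,N)$ with $\#\{j:\nu_j\ge M\}$ for the random signature $\nu$ of vertical arrows crossing between rows $N$ and $N+1$ is correct, the $\mathsf F_\nu\mathsf G_\nu$ structure of its law is the actual starting point in \cite{BP-inhom}, and both the ``difference operators acting on the Cauchy identity'' route and the duality route are legitimate and documented ways to get $q$-moment formulas for this class of models. But as a proof it has a genuine gap, and you have located it yourself without closing it. Step 2 asserts the existence of a commuting family $\mathscr D_1,\mathscr D_2,\dots$ in the $u$-variables, diagonalized on the $\mathsf F_\nu$ with eigenvalues that assemble into $Q^{\ell(\ell-1)/2}\prod_{i=1}^{\ell}(Q^{\HT(M,N)}-Q^{i-1})$; this assertion \emph{is} the content of the lemma, not a tool one may invoke to prove it --- in the inhomogeneous higher spin setting the construction of such operators, their eigenrelation, and the extraction of precisely the combination $\prod_i(Q^{\#\{j:\nu_j\ge M\}}-Q^{i-1})$ (rather than, say, $e_\ell$ of the $Q^{\nu_j}$'s) all require proof. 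Step 3 then defers the entire contour/residue bookkeeping (``the main obstacle is making Step 3 rigorous''), including the justification that all $\ell$ contours may be collapsed to a common small circle around $\{u_y^{-1}\}$ with the $w_a=Qw_b$ poles contributing nothing and the $(s_x\xi_x)^{-1}$ poles staying outside --- which is exactly the delicate part (and is the one point the paper does address explicitly after the statement). The duality alternative is likewise only named: one would still have to produce the duality functional, the closed $q$-difference system with its two-body boundary condition, verify the right-hand side of \eqref{intro-multi_moments} solves it, and prove uniqueness. In short, every hard step is correctly identified but none is executed, so the proposal is an outline of the known proof rather than a proof.
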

Note that the other potential poles $w_*=(s_x\xi_x)^{-1}$ of the integrand 
always lie outside the integration contours because they are either negative 
(and $u_y>0$ for any $y$), or $(s_x\xi_x)^{-1}<s_x\xi_x^{-1}<u_y$ for any $x$ 
and $y$ due to our assumptions. The presence or absence of the potential poles 
$w_a=Qw_b$ inside the integration contours is irrelevant due to vanishing of 
the corresponding residues, cf. \cite[proof of Theorem 8.13]{BP-inhom}.

Evaluating the right-hand side of \eqref{intro-multi_moments} as a sum of 
residues, we see that \eqref{intro-multi_moments} is actually an identity of 
rational functions in participating parameters, with the caveat that the 
left-hand side may not always be interpreted as an expectation over a positive 
measure. 

\section{Macdonald measures}\label{sc:Macdonald}

Our exposition in this section follows \cite[Section 2]{BC}. Our notation for partitions,
Macdonald symmetric functions, etc. is mostly the standard one used in \cite{Macdonald1995}. 

Let $\Y$ be the set of all partitions and $\Sym$ be the algebra of symmetric functions in indeterminates $x_1,x_2,\dots$ with coefficients in $\C(q,t)$. Here $q$ and $t$ are (in the general case formal) parameters that we will assume to belong to $[0,1)$. A particularly nice linear basis of $\Sym$ is formed by the Macdonald symmetric functions $P_\la(x_1,x_2,\dots;q,t)$ indexed by $\la\in \Y$. 
The Macdonald symmetric functions are orthogonal with respect to a dot product on $\Sym$ defined in terms of the power sums via
\begin{equation*}
\langle p_{\lambda},p_{\mu}\rangle = \langle p_{\lambda},p_{\mu}\rangle_{q,t} = \delta_{\lambda \mu}z_{\lambda}(q,t),\qquad z_{\lambda}(q,t)= z_{\lambda} \prod_{i=1}^{\ell(\lambda)}\frac{1-q^{\lambda_i}}{1-t^{\lambda_i}}, \qquad z_{\lambda} = \prod_{i\geq 1}i^{m_{i}}(m_i)!,
\end{equation*}
for $\lambda=1^{m_1}2^{m_2}\cdots$. Along with $P_{\lambda}$ one defines
\begin{equation*}
Q_{\lambda} = \frac{P_{\lambda}}{\langle P_{\lambda},P_{\lambda}\rangle},
\end{equation*}
so that $P_{\lambda}$ and $Q_{\mu}$ are orthonormal.

Specializing $q=t$ recovers the Schur symmetric functions $s_\la(x)$ that are independent of the parameters, $q=0$ recovers the Hall-Littlewood symmetric functions with parameter $t$, and taking $q=t^{\alpha}$ with $t\rightarrow 1$ recovers the Jack symmetric functions with parameter $\alpha$.

The complete homogeneous symmetric function $h_{r}$ has a $(q,t)$-analog which is denoted $g_r=Q_{(r)}$ and can be expressed as
$g_r = \sum_{|\lambda|=r} z_{\lambda}(q,t)^{-1}p_{\lambda}$ (this is analogous in the sense that $h_{r} = s_{(r)}$). The $g_r$'s with $r=1,2,\dots$ form an algebraically independent system of generators for $\Sym$.

The Macdonald symmetric polynomials are defined as restrictions of the $P_{\lambda}$'s to finitely many variables $x_1,\ldots, x_m$ and written as $P_{\lambda}(x_1,\ldots, x_m)$. If $m<\ell(\lambda)$ then $P_{\lambda}(x_1,\ldots, x_m)=0$.

For any two sets of indeterminates $x_1,x_2,\ldots$ and $y_1,y_2,\ldots$ define
\begin{equation}\label{PiDef1}
\Pi(x;y)=\sum_{\lambda\in \Y} P_{\lambda}(x) Q_{\lambda}(y).
\end{equation}
Then \cite[VI,(2.5)]{Macdonald1995},
\begin{equation}\label{eqn12}
\Pi(x;y)= \prod_{i,j} \frac{(tx_i y_j;q)_{\infty}}{(x_i y_j;q)_{\infty}}= \exp\left(\sum_{n\geq 1} \frac{1}{n} \frac{1-t^n}{1-q^n} p_n(x)p_n(y)\right),
\end{equation}
where $(a;q)_{\infty}=(1-a)(1-aq)(1-aq^2)\cdots$ is the $q$-Pochhammer symbol. This is known as the \emph{Cauchy identity} for Macdonald symmetric functions. 

If the $P_{\lambda}$ and $Q_{\lambda}$ are considered as symmetric functions in variables $x_1,x_2,\ldots$ and $y_1,y_2,\ldots$ (respectively) then the Cauchy identity holds in the sense of formal power series.  If either side is an absolutely convergent series, then the identity turns into a numeric equality. 

Let us introduce a convenient extension to the concept of evaluating at a sequence of variables. 

A {\it specialization}\index{specialization} $\rho$ of $\Sym$ is an algebra homomorphism of $\Sym$ to $\C$. We denote the application of $\rho$ to $f\in\Sym$ as $f(\rho)$. For two specializations $\rho_1$ and $\rho_2$ we define their union $\rho=(\rho_1,\rho_2)$ as the specialization defined on power sum symmetric functions via
\begin{equation*}
p_n(\rho_1,\rho_2)=p_n(\rho_1)+p_n(\rho_2), \qquad n\ge 1,
\end{equation*}
and extended to $\Sym$ by linearity. 

\begin{definition}
A specialization $\rho$ of $\Sym$ is {\it Macdonald nonnegative} (or just `nonnegative') if it takes nonnegative values on the skew Macdonald symmetric functions: $P_{\lambda/\mu}(\rho)\ge 0$ for any partitions $\la$ and $\mu$\footnote{The skew functions $P_{\la/\mu}$ turn into the ordinary $P_\la$ when $\mu=\varnothing$; we did not define them as they play no role in what follows.}.
\end{definition}

There is no known classification of the Macdonald nonnegative specializations. The classification is known in the case of nonnegative specializations of the Jack symmetric functions \cite{KOO}, and in the subcase of Schur symmetric functions this is a classical statement known as \emph{Thoma's theorem}, see \cite{Ker}, \cite{BO-book}, and references therein. In the Macdonald case, however, it is not hard to come up with a class of examples. In fact, Kerov conjectured that this class completely classifies all nonnegative specializations (\cite{Ker}, section II.9). Let us describe it.  

Let $\{\alpha_i\}_{i\ge 1}$, $\{\beta_i\}_{i\ge 1}$, and $\gamma$ be nonnegative numbers, and $\sum_{i=1}^\infty(\alpha_i+\beta_i)<\infty$. Let $\rho$ be a specialization of $\Sym$ defined by
\begin{equation}\label{tag1}
\sum_{n\ge 0} g_n(\rho) u^n= \exp(\gamma u) \prod_{i\ge 1} \frac{(t\alpha_iu;q)_\infty}{(\alpha_i u;q)_\infty}\,(1+\beta_i u)=: \Pi(u;\rho).
\end{equation}
Since $g_n$ form an algebraically independent system of generators of $\Sym$, this uniquely defines the specialization $\rho$.
As zero $\alpha$'s and $\beta$'s do not change this expression, we will simply assume that all $\alpha_i$ and $\beta_i$ are strictly positive (there may be finitely many of them, or none at all). 

The middle expression in (\ref{tag1}) can be viewed as a specialization of \eqref{eqn12}. More generally, for any two specializations  $\rho_1,\rho_2$ set
\begin{equation}\label{PIeqn}
\Pi(\rho_1;\rho_2)=\sum_{\la\in\Y}P_\la(\rho_1)Q_\la(\rho_2)= \exp\left(\sum_{n\ge 1}\frac{1}{n}\,\frac{1-t^n}{1-q^n}\,p_n(\rho_1)p_n(\rho_2)\right)
\end{equation}
provided that the series converge. One can show that for nonnegative specializations $\rho_1$, $\rho_2$ of the form \eqref{tag1}, $\Pi(\rho_1;\rho_2)<\infty$ if and only if the product of any $\alpha$-parameter of $\rho_1$ and any $\alpha$-parameter of $\rho_2$ is $<1$. 

\begin{definition}\label{macprocessdef}
For any two nonnegative specializations $\rho_1,\rho_2$ such that $\Pi(\rho_1;\rho_2)<\infty$, define the {\it Macdonald measure} $\MM(\rho_1,\rho_2)$ as the probability measure on $\Y$ that assigns to a partition $\la\in\Y$ the weight
\begin{equation*}
\MM(\rho_1;\rho_2)(\la)=\frac{P_\la(\rho_1)Q_\la(\rho_2)}
{\Pi(\rho_1;\rho_2)}\,. 
\end{equation*}
\end{definition}

Consider a Macdonald measure $\MM(\rho_1,\rho_2)$ with $\rho_1=(x_1,\dots,x_n)$, a specialization into $n$ positive variables, and $\rho_2$ of the form \eqref{tag1} with $\gamma=0$. For finiteness of the measure we need to assume that $x_i\alpha_j<1$ for any $i,j\ge 1$. Note that such a measure is supported by partitions $\la$ with $\ell(\la)\le n$ because otherwise $P_\la(x_1,\dots,x_n)$ vanishes. 

\begin{proposition}\label{pr:macd-exp} For any $0\le \ell\le n$ we have
\begin{multline}\label{eq:macd-exp}
\E e_\ell(q^{\la_1} t^{n-1},q^{\la_2}t^{n-2},\dots,q^{\la_n})\\ =\frac {1}{(2\pi \i)^\ell \ell!}\oint\cdots\oint
\det\left[\frac 1{tz_a-z_b}\right]_{a,b=1}^\ell
\prod_{i=1}^\ell  \left(\prod_{m=1}^n \frac{tz_i-x_m}{z_i-x_m}\prod_{j\ge 1}\frac{1-\alpha_jz_i}{1-t\alpha_jz_i}\frac{1+q\beta_jz_i}{1+\beta_jz_i}\right)
dz_j,
\end{multline}
where the integration contours are sufficiently small positively oriented simple curves encircling the points $z_*=x_j$, $j=1,\dots,n$, and $e_\ell$'s are the elementary symmetric polynomials.\footnote{Recall that $e_\ell (y_1,\dots,y_n)=\sum_{1\le i_1<\dots<i_\ell\le n}y_{i_1}\cdots y_{i_n}$.}
\end{proposition}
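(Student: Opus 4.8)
The plan is to compute the expectation $\E e_\ell(q^{\la_1}t^{n-1},\dots,q^{\la_n})$ directly against the Macdonald measure $\MM(\rho_1;\rho_2)$ with $\rho_1=(x_1,\dots,x_n)$ and $\rho_2$ of the form \eqref{tag1} with $\gamma=0$, and to recognize the answer as a contour integral. The key observation is that $e_\ell(q^{\la_1}t^{n-1},q^{\la_2}t^{n-2},\dots,q^{\la_n})$, evaluated on a partition $\la$ with $\ell(\la)\le n$, is (up to a known normalization) exactly the kind of observable that acts diagonally on $P_\la(x_1,\dots,x_n)$: more precisely, the eigenvalues $q^{\la_i}t^{n-i}$ are the eigenvalues of the first Macdonald difference operator $D_n^1$ (in the variables $x_1,\dots,x_n$), and the higher $e_\ell$'s of these eigenvalues are the eigenvalues of the operators $D_n^\ell$ built from the generating function $D_n(X)=\sum_\ell X^\ell D_n^\ell$. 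So I would first recall (from \cite{Macdonald1995}, and as used in \cite{BC}) that
\[
D_n^\ell\, P_\la(x_1,\dots,x_n)=e_\ell(q^{\la_1}t^{n-1},\dots,q^{\la_n})\,P_\la(x_1,\dots,x_n),
\]
where $D_n^\ell$ is given by an explicit sum over $\ell$-subsets $I\subset\{1,\dots,n\}$ with a rational coefficient $\prod_{i\in I,j\notin I}\frac{tx_i-x_j}{x_i-x_j}$ times the $q$-shift operator $\prod_{i\in I}T_{q,x_i}$.

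The second step is to apply this to $P_\la(\rho_1)Q_\la(\rho_2)/\Pi(\rho_1;\rho_2)$ and sum over $\la$. Writing $\E e_\ell(\cdots)=\Pi(\rho_1;\rho_2)^{-1}\sum_\la \bigl(D_n^\ell P_\la\bigr)(x_1,\dots,x_n)\,Q_\la(\rho_2)$, and moving the difference operator $D_n^\ell$ outside the sum, one gets $\Pi(\rho_1;\rho_2)^{-1}\,D_n^\ell\,\Pi\bigl((x_1,\dots,x_n);\rho_2\bigr)$, where $\Pi((x_1,\dots,x_n);\rho_2)=\prod_{i=1}^n\Pi(x_i;\rho_2)$ factorizes with $\Pi(x;\rho_2)=\prod_{j\ge1}\frac{(t\al_j x;q)_\infty}{(\al_j x;q)_\infty}(1+\be_j x)$ from \eqref{tag1}. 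Now $D_n^\ell$ is a first-order (in each variable) $q$-difference operator, so acting on this product and dividing back by it produces, for each $\ell$-subset $I$, the coefficient $\prod_{i\in I,\,j\notin I}\frac{tx_i-x_j}{x_i-x_j}$ multiplied by $\prod_{i\in I}\frac{\Pi(qx_i;\rho_2)}{\Pi(x_i;\rho_2)}$. A short computation with the $q$-Pochhammer symbol gives the telescoping ratio $\frac{\Pi(qx;\rho_2)}{\Pi(x;\rho_2)}=\prod_{j\ge1}\frac{1-\al_j x}{1-t\al_j x}\cdot\frac{1+q\be_j x}{1+\be_j x}$, which is precisely the factor appearing inside the integrand of \eqref{eq:macd-exp}.

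The final step is the standard residue/symmetrization trick converting the sum over $\ell$-subsets $I\subset\{1,\dots,n\}$ into the $\ell$-fold contour integral with the determinant $\det[1/(tz_a-z_b)]$. One introduces variables $z_i$, $i\in I$, and observes that the sum
\[
\sum_{|I|=\ell}\ \prod_{\substack{i\in I,\ j\notin I}}\frac{tx_i-x_j}{x_i-x_j}\ \prod_{i\in I}F(x_i)
\]
equals $\frac{1}{\ell!(2\pi\i)^\ell}\oint\cdots\oint \det\!\bigl[\tfrac{1}{tz_a-z_b}\bigr]\prod_i\bigl(\prod_{m=1}^n\tfrac{tz_i-x_m}{z_i-x_m}\bigr)F(z_i)\,dz_i$ when the contours are small loops around the $x_m$'s; this is exactly the computation done in \cite{BC} (the residue at $z_a=x_{i_a}$ picks out the subset $\{i_1,\dots,i_\ell\}$, the determinant collapses on the diagonal in the appropriate way, and the off-diagonal poles $tz_a=z_b$ either lie outside the contours or contribute vanishing residues), and one checks the normalization and the fact that the diagonal terms $z_a=z_b$ are handled correctly by the $tz_a-z_b$ denominators. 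Assembling the three steps yields \eqref{eq:macd-exp}.

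The main obstacle I anticipate is the bookkeeping in the last step: verifying that the symmetrization identity reproduces the determinant $\det[1/(tz_a-z_b)]$ exactly (with the right combinatorial factor $1/\ell!$ and no spurious contributions from the would-be poles at $tz_a=z_b$ or from coincident contours), and confirming that the contours can indeed be taken as small loops around $\{x_1,\dots,x_n\}$ without picking up the poles $z_i=1/(t\al_j)$ or $z_i=-1/\be_j$ coming from $F$. The algebraic parts — diagonalization by $D_n^\ell$, pulling the operator out of the sum via the Cauchy identity, and the $q$-Pochhammer telescoping — are routine, but they must be done carefully because the measure is only finite under the constraint $x_i\al_j<1$, and one should make sure the series $\sum_\la P_\la(\rho_1)Q_\la(\rho_2)$ and its $D_n^\ell$-image both converge absolutely so that interchanging $D_n^\ell$ with the summation is legitimate.
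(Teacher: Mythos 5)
Your argument is correct and is essentially the paper's own proof: the paper simply cites (2.32), Proposition 2.2.9, and Proposition 2.2.11 of \cite{BC}, which are exactly the three steps you reconstruct (diagonalization of $e_\ell(q^{\la_1}t^{n-1},\dots,q^{\la_n})$ via the Macdonald difference operators $D_n^\ell$, applying $D_n^\ell$ to the factorized $\Pi$ and dividing, and the symmetrization into the $\ell$-fold contour integral with $\det[1/(tz_a-z_b)]$). The only point to watch is the $t^{\binom{\ell}{2}}$ normalization in Macdonald's $D_n^\ell$, which you correctly flag as bookkeeping absorbed in the determinant expansion.
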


Note that, similarly to \eqref{intro-multi_moments}, the other potential poles $(t\alpha_j)^{-1}$ and $-\beta_j^{-1}$ always lie outside the integration contours as they are either negative, or $(t\alpha_j)^{-1}>x_i$ for any $i,j$, because we assumed that $x_i\alpha_j<1$ and $t<1$. Also similarly to \eqref{intro-multi_moments}, the presence of potential poles $tz_a=z_b$ inside the integration contours is irrelevant due to vanishing of the corresponding residues. 

Proposition \ref{pr:macd-exp} is a straightforward corollary of (2.32), Proposition 2.2.9 and Proposition 2.2.11 of \cite{BC}.

\begin{remark}\label{rm:independence} If there are no $\beta_j$'s present, the 
right-hand side of \eqref{eq:macd-exp} is manifestly independent of $q$, which 
means that the expectation in the left-hand side does not depend on $q$ either. 

Similarly, if the $\beta_j$'s are present, replacing $q$ by $\tilde q= q^{1/k}$ 
and $\{\beta_j\}$ by $$
\{\tilde\beta_j\}=\{\beta_j\}\sqcup\{q^{1/k}\beta_j\}\sqcup\dots\sqcup 
\{q^{(k-1)/k}\beta_j\}$$ 
with $k$ being an arbitrary integer $\ge 1$, also does not change either side 
of \eqref{eq:macd-exp}. 

\end{remark}

As for the six vertex model, it will be convenient for us to collect the data that gives rise to a Macdonald measure used in \eqref{eq:macd-exp} into a single notation. 

\begin{definition}\label{df:mac-spec} A collection $\mathcal S^\MM=\{n,\{x_i\}_{i=1}^n, \{\alpha_i\}_{i\ge 1}, \{\beta_i\}_{i\ge 1}\}$ is called a \emph{specification} of the Macdonald measure if $n\ge 1$, the sets $\{\alpha_i\}_{i\ge 1}$, $\{\beta_i\}_{i\ge 1}$ are finite, all participating parameters $x_i,\alpha_i,\beta_i$ are positive, and $x_i\alpha_j<1$ for any $i,j$. \footnote{Note that the convergence condition $\sum_i(\alpha_i+\beta_i)<\infty$ is automatically satisfied.}
\end{definition} 

\section{Matching expectations}\label{sc:match}
The goal of this section is to provide conditions on specifications of the higher spin six vertex model (see Definition \ref{df:6v-spec}) and the Macdonald measure (see Definition \ref{df:mac-spec}) that would imply the coincidence of \eqref{intro-multi_moments} and \eqref{eq:macd-exp}. 

\begin{definition}\label{df:match} We say that a specification $\mathcal S^{\vv}$ of the higher spin six vertex model in a quadrant with parameter $Q$ of Definition \ref{df:6v-spec} \emph{matches} a specialization $\mathcal S^{MM}$ of the Macdonald measure with parameters $(q,t)$ of Definition \ref{df:mac-spec}, if the following conditions are satisfied: 
\begin{itemize}
\item $Q=t$, $N=n$, $\{u_1,\dots, u_N\}=\{x_1^{-1},\dots,x_N^{-1}\}$. 
\item There exists a splitting of the set $\{\alpha_i\}$ into clusters forming geometric progressions of ratio $t$:
$$
\{\alpha_i\}_{i\ge 1}=\bigsqcup_{j\ge 1} C_{k_j,t}(\tilde\alpha_j),\qquad C_{k,t}(\tilde\al)=\{\tilde\al,t\tilde\al,\dots,t^{k-1}\tilde\al\},
$$
a splitting of the set $\{\beta_i\}$ into clusters forming geometric progressions of ratio $q$:
$$
\{\beta_i\}_{i\ge 1}=\bigsqcup_{j\ge 1} C_{l_j,q}(\tilde\beta_j),\qquad C_{l,q}(\tilde\be)=\{\tilde\be,q\tilde\be,\dots,q^{l-1}\tilde\be\},
$$

a bijection
$$
\{C_{k_i,t}(\tilde\alpha_i)\}_{i\ge 1}\sqcup \{C_{l_j,q}(\tilde\beta_j)\}_{j\ge 
1} \longleftrightarrow
\{s_x\}_{x=1}^{M-1} ,
$$
such that clusters of $\alpha$'s correspond to positive $s_x$'s, and clusters of $\beta$'s correspond to negative $s_x$'s. 

\item If in the above bijection a cluster $C_{k,t}(\tilde\alpha)$ corresponds to some $s_x$, $1\le x\le M-1$, then $s_x=t^{-k/2}=Q^{-k/2}$ and $\xi_x=t^{-k/2}\tilde\alpha^{-1}$. 

\item If, on the other hand, in the above bijection a cluster $C_{l,q}(\tilde\beta)$ corresponds to some $s_x$, then $s_x=-q^{l/2}$ and $\xi_x=q^{-l/2}\tilde\beta^{-1}$. 
\end{itemize}
\end{definition}

This peculiar definition is justified by the following statement. 

\begin{theorem}\label{th:match} Assume that specifications $\mathcal S^{\vv}$ and $\mathcal S^{\MM}$ match in the sense of Definition \ref{df:match}. Then for any $0\le \ell\le N$ 
\begin{equation}\label{eq:match1}
{(-1)^\ell}\E_{\vv} \prod_{i=1}^{\ell}\frac{
		Q^{\HT(M,N)}-Q^{i-1}}{1-Q^{i}}=\E_\MM e_\ell(q^{\la_1} t^{n-1},q^{\la_2}t^{n-2},\dots,q^{\la_n}).
\end{equation}
\end{theorem}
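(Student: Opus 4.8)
The plan is to prove Theorem \ref{th:match} by a direct comparison of the two contour integral representations \eqref{intro-multi_moments} and \eqref{eq:macd-exp}, showing that under the matching conditions of Definition \ref{df:match} the integrands and the contours transform into each other after an explicit change of variables. First I would rewrite the left-hand side of \eqref{eq:match1}: expanding the product $\prod_{i=1}^\ell(Q^{\HT}-Q^{i-1})$ and dividing by $\prod_{i=1}^\ell(1-Q^i)$, one sees that $(-1)^\ell Q^{-\ell(\ell-1)/2}\prod_{i=1}^\ell(1-Q^i)^{-1}$ times the right side of \eqref{intro-multi_moments} is what must be matched with the $\ell$-fold integral in \eqref{eq:macd-exp}. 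So the task reduces to showing
\begin{multline*}
\frac{(-1)^\ell}{\prod_{i=1}^\ell(1-Q^i)}\oint\!\cdots\!\oint \prod_{a<b}\frac{w_a-w_b}{w_a-Qw_b}\prod_{i=1}^\ell\Biggl(w_i^{-1}\prod_{x=1}^{M-1}\frac{1-s_x\xi_x^{-1}w_i}{1-s_x^{-1}\xi_x^{-1}w_i}\prod_{y=1}^N\frac{1-Qu_yw_i}{1-u_yw_i}\Biggr)\frac{dw_i}{2\pi\i}\\
=\frac{1}{\ell!}\oint\!\cdots\!\oint\det\!\left[\frac1{tz_a-z_b}\right]_{a,b=1}^\ell\prod_{i=1}^\ell\Biggl(\prod_{m=1}^n\frac{tz_i-x_m}{z_i-x_m}\prod_{j\ge1}\frac{1-\alpha_jz_i}{1-t\alpha_jz_i}\frac{1+q\beta_jz_i}{1+\beta_jz_i}\Biggr)\frac{dz_i}{2\pi\i}.
\end{multline*}

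The key steps, in order, are: (i) perform the substitution $w_i=z_i$ (or possibly $w_i = z_i$ up to an overall sign/scaling dictated by $u_y=x_y^{-1}$), turning the factor $\prod_y(1-Qu_yw_i)/(1-u_yw_i)$ into $\prod_m(1-tz_i/x_m)/(1-z_i/x_m)=\prod_m(x_m-tz_i)/(x_m-z_i)$, which is $(-1)^n$ times the $\prod_m(tz_i-x_m)/(z_i-x_m)$ appearing on the Macdonald side — so I must track a sign $(-1)^{n\ell}$ and check it against the $(-1)^\ell$ prefactor and whatever comes from the $s_x$ factors; (ii) use the cluster structure to collapse the product over $x=1,\dots,M-1$: for a positive $s_x=t^{-k/2}$ with $\xi_x=t^{-k/2}\tilde\alpha^{-1}$ one has $s_x\xi_x^{-1}=\tilde\alpha$ and $s_x^{-1}\xi_x^{-1}=t^{k}\tilde\alpha\cdot t^{-k}= \ldots$ — more precisely $s_x^{-1}\xi_x^{-1} = t^{k/2}\cdot t^{k/2}\tilde\alpha = t^k\tilde\alpha$, so the single factor $(1-\tilde\alpha w)/(1-t^k\tilde\alpha w)$ must be shown to equal $\prod_{r=0}^{k-1}(1-\alpha_r w)/(1-t\alpha_r w)$ with $\alpha_r=t^r\tilde\alpha$, which is the elementary telescoping identity $\prod_{r=0}^{k-1}\frac{1-t^r\tilde\alpha w}{1-t^{r+1}\tilde\alpha w}=\frac{1-\tilde\alpha w}{1-t^k\tilde\alpha w}$; (iii) do the analogous telescoping for a negative $s_x=-q^{l/2}$ with $\xi_x=q^{-l/2}\tilde\beta^{-1}$, giving $s_x\xi_x^{-1}=-q^l\tilde\beta$ wait — let me recompute: $s_x\xi_x^{-1}=-q^{l/2}\cdot q^{l/2}\tilde\beta=-q^l\tilde\beta$ and $s_x^{-1}\xi_x^{-1}=-q^{-l/2}\cdot q^{l/2}\tilde\beta=-\tilde\beta$, so $\frac{1-s_x\xi_x^{-1}w}{1-s_x^{-1}\xi_x^{-1}w}=\frac{1+q^l\tilde\beta w}{1+\tilde\beta w}$, which must match $\prod_{r=0}^{l-1}\frac{1+q^{r+1}\tilde\beta w}{1+q^r\tilde\beta w}=\frac{1+q^l\tilde\beta w}{1+\tilde\beta w}$ — again a telescoping product, identifying $\beta_r=q^r\tilde\beta$; (iv) match the symmetrization prefactors, i.e. show $\frac{(-1)^\ell}{\prod_{i=1}^\ell(1-Q^i)}\prod_{a<b}\frac{w_a-w_b}{w_a-Qw_b}\prod_i w_i^{-1}$, when symmetrized over the $\ell$ variables (which is legitimate since the integrand is otherwise being matched factor-by-factor and the contours are symmetric), agrees with $\frac1{\ell!}\det[\frac1{tz_a-z_b}]$ — this uses the standard Cauchy-determinant / symmetrization identity $\sum_{\sigma\in S_\ell}\prod_{a<b}\frac{z_{\sigma(a)}-z_{\sigma(b)}}{z_{\sigma(a)}-tz_{\sigma(b)}}=\frac{\prod_{a<b}(z_a-z_b)^2}{\prod_{a,b}(z_a-tz_b)}\cdot(\text{const})$ together with the fact that $\det[\frac1{tz_a-z_b}]$ has the same form; and (v) check that the contours agree — on the six-vertex side they enclose $w_*=u_y^{-1}=x_y$, on the Macdonald side they enclose $z_*=x_m$, so after $w_i=z_i$ these literally coincide, and the remarks after each proposition confirm the spurious poles ($s_x\xi_x$ resp. $(t\alpha_j)^{-1},-\beta_j^{-1}$ and $w_a=Qw_b$ resp. $tz_a=z_b$) are harmless on both sides.

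I would also insert the one genuinely substantive bookkeeping remark: because $\ell\le N=n$ and the supports of both measures consist of partitions with at most $n=N$ parts (on the Macdonald side $P_\la(x_1,\dots,x_n)=0$ otherwise; on the six-vertex side $\HT(M,N)\le N$ since only $N$ paths have entered), both $e_\ell$ and the product $\prod_{i=1}^\ell(Q^{\HT}-Q^{i-1})$ are genuinely defined, and the identity is ultimately an identity of rational functions in all parameters as noted after \eqref{intro-multi_moments} — so it suffices to verify it after the change of variables without worrying about whether a particular parameter specialization makes one side a bona fide expectation. This lets me treat the whole thing as an algebraic identity between two contour integrals.

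The main obstacle I expect is step (iv): getting the combinatorial constant exactly right when converting the product form $\prod_{a<b}\frac{w_a-w_b}{w_a-Qw_b}$ (which is not symmetric in the $w$'s) into the determinantal kernel $\det[\frac1{tz_a-z_b}]$ (which, with its diagonal entries $\frac1{(t-1)z_a}$, is symmetric) — one must symmetrize the six-vertex integrand over $S_\ell$, which is allowed because the remaining single-variable factors are identical across variables and the contour is the same for each, then identify the resulting symmetric rational function with $\frac{1}{\ell!}\det[\frac{1}{tz_a-z_b}]$ up to the normalization $(-1)^\ell\prod_{i=1}^\ell(1-Q^i)^{-1}$ and the $\prod_i w_i^{-1}=\prod_i z_i^{-1}$, tracking the diagonal $a=b$ contribution $\frac{1}{(t-1)z_a}$ carefully. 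A secondary but real nuisance is confirming that the overall sign $(-1)^{n\ell}$ from rewriting $\prod_m(x_m-tz)/(x_m-z)$ versus $\prod_m(tz-x_m)/(z-x_m)$ is absorbed correctly; I suspect $n$ and the number of $s_x$ factors conspire so that the net sign is exactly the $(-1)^\ell$ on the left of \eqref{eq:match1}, but this must be checked since $\ell\le N=n$ rather than $\ell=n$. Everything else is telescoping and contour identification, which is routine.
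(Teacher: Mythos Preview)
Your approach is essentially the same as the paper's: the paper likewise reduces the identity to matching the two contour integrals, handles the cross-term by symmetrization (citing \cite[Proposition 3.2.2]{BC}, which is exactly the identity you outline in step (iv) with the constant $Q^{-\ell(\ell-1)/2}(Q;Q)_\ell/\ell!$ already computed), and matches the single-variable factors via the telescoping products over $\alpha$- and $\beta$-clusters exactly as in your steps (ii)--(iii).

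One small correction that will save you trouble: your worry about a sign $(-1)^{n\ell}$ in step (i) is unfounded. With $u_y=x_y^{-1}$ and $Q=t$ one has $\dfrac{1-Qu_yw}{1-u_yw}=\dfrac{x_y-tw}{x_y-w}=\dfrac{tw-x_y}{w-x_y}$, since both numerator and denominator flip sign; so the $u$-factors on the six-vertex side match the $x$-factors on the Macdonald side with no extra sign, and the entire sign/constant bookkeeping is carried by the symmetrization identity alone.
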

\begin{proof} We need to match the right-hand sides of \eqref{intro-multi_moments} and \eqref{eq:macd-exp}. To do that we use \cite[Proposition 3.2.2]{BC} that gives (with a slight change in notation) for any continuous function $f$
\begin{multline*}
\frac{(-1)^\ell}{(2\pi \i)^\ell} \oint \cdots \oint \prod_{1\leq a<b\leq \ell} \frac{w_a-w_b}{w_a-Qw_b}\,\prod_{j=1}^{\ell}  \frac{f(w_j)dw_j}{w_j}
\\= \frac{Q^{\frac{-\ell(\ell-1)}{2}}(Q;Q)_\ell}{(2\pi \i)^{\ell}\ell!} \oint \cdots \oint \det\left[\frac{1}{Qz_i-z_j}\right]_{a,b=1}^{\ell} \prod_{j=1}^{\ell} f(z_j) dz_j,
\end{multline*}
where the $z_j$-contours and $w_j$-contours are all the same (this identity is proved by a straightforward symmetrization of the integration variables). This provides a match for the cross-terms in the two integrals, and it remains to compare the multiplicative terms. 

For a cluster $C_{k,t}(\tilde \alpha)$ we have
$$
\prod_{\al\in C_{k,t}(\tilde \al)} \frac{1-\al z}{1-t\al z}=\frac{1-\tilde \alpha z}{1-t^k\tilde\alpha z}=
\frac{1-s \xi^{-1}z}{1-s^{-1}\xi^{-1}z}\quad \textrm{for}\quad s=t^{-k/2},\ \xi=t^{-k/2}\tilde\alpha^{-1}. 
$$
On the other hand, for a cluster $C_{l,q}(\tilde\beta)$ we have 
$$
\prod_{\be\in C_{l,q}(\tilde \be)}\frac{1+q\be z}{1+\be z}=\frac{1+q^l\tilde\beta z}{1+\tilde\beta z}=\frac{1-s \xi^{-1}z}{1-s^{-1}\xi^{-1}z}\quad \textrm{for}\quad s=-q^{l/2},\ \xi=q^{-l/2}\tilde\beta^{-1}. 
$$
We thus see that the two integrands completely coincide, and so do the 
integration contours. 
\end{proof}

\begin{example}\label{ex:homog} The homogeneous stochastic six vertex model with $s_x\equiv s=Q^{-1/2}$, $\xi_x\equiv 1$, $u_y\equiv u$, corresponds to the Macdonald measure with $t=Q$, $x_i\equiv x= u^{-1}$, $\alpha_j\equiv \alpha= t^{-1/2}$. The positivity condition $u>s=Q^{-1/2}$ for the former 
exactly translates into the convergence condition $x\alpha=u^{-1}Q^{-1/2}<1$ for the latter. 
Note that $q$ here can be arbitrary, cf. Remark \ref{rm:independence}. In particular, one can take $q=t$, which turns the measure on partitions into a Schur measure and removes its dependence on $q$ and $t$ (dependence on $q$ and $t$ remains in the observables). 

The homogeneous stochastic six vertex model with $s_x\equiv s\in (-1,0)$, 
$\xi_x\equiv 1$, $u_y\equiv u$, corresponds to the Macdonald measure with $t=Q$, 
$q=s^2$, $x_i\equiv x= u^{-1}$, $\beta_j\equiv \beta= q^{-1/2}$. This time both 
parameters $q$ and $t$ are uniquely determined, and to see a Schur measure we must have $s=-Q^{1/2}$. 
\end{example}

By multiplying both sides of \eqref{eq:match1} by $\zeta^l$, summing over $0\le \ell\le N$, and using the q-binomial theorem, we also obtain 
\begin{corollary}\label{cor:match} Assume that specifications $\mathcal S^{\vv}$ and $\mathcal S^{\MM}$ match in the sense of Definition \ref{df:match}. Then we have the following equality of polynomials in $\zeta$:
\begin{equation}\label{eq:match2}
\E_{\vv} \prod_{i\ge 0} \frac{1+\zeta Q^i}{1+\zeta Q^{\HT(M,N)+i}}  =\E_{\MM} \prod_{j=1}^N (1+\zeta q^{\la_j}t^{N-j}).  
\end{equation}
Equivalently, for any $\zeta\notin -Q^{\Z_{\le 0}}$, 
\begin{equation}\label{eq:match3}
\E_{\vv} \prod_{i\ge 0} \frac{1}{1+\zeta Q^{\HT(M,N)+i}}  =\E_{\MM} \prod_{j\ge 0} \frac{1+\zeta q^{\la_{N-j}}t^{j}}{1+\zeta t^j}\,,
\end{equation}
where in the right-hand side we assume that $q^{\la_{-m}}=0$ for $m\ge 0$.  
\end{corollary}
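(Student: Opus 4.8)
The plan is to obtain Corollary \ref{cor:match} directly from Theorem \ref{th:match} by a generating-function manipulation, so I do not need to revisit any integral representations. First I would fix matched specifications $\mathcal S^{\vv}$, $\mathcal S^{\MM}$, so that \eqref{eq:match1} holds for every $0\le \ell\le N$, multiply the $\ell$-th identity by $\zeta^\ell$, and sum over $\ell$. On the Macdonald side this is immediate: $\sum_{\ell=0}^N \zeta^\ell e_\ell(y_1,\dots,y_N)=\prod_{j=1}^N(1+\zeta y_j)$ with $y_j=q^{\la_j}t^{N-j}$, and since the expectation is a finite/absolutely convergent average over $\la$ with $\ell(\la)\le N$, one may exchange $\E_\MM$ with the finite sum to get the right-hand side of \eqref{eq:match2}. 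The only genuine computation is the six-vertex side: I must show $\sum_{\ell=0}^N \zeta^\ell (-1)^\ell \prod_{i=1}^\ell \frac{Q^{\HT}-Q^{i-1}}{1-Q^i}=\prod_{i\ge 0}\frac{1+\zeta Q^i}{1+\zeta Q^{\HT+i}}$ as an identity valid for each fixed value $h=\HT(M,N)\in\Z_{\ge 0}$ (here $h\le N$, so the sum over $\ell$ truncates correctly).

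For that step I would rewrite the summand: $(-1)^\ell\prod_{i=1}^\ell\frac{Q^{h}-Q^{i-1}}{1-Q^i}=\prod_{i=1}^\ell\frac{Q^{i-1}-Q^h}{1-Q^i}=\prod_{i=1}^\ell\frac{1-Q^{h-i+1}}{1-Q^i}Q^{i-1}$ after factoring $Q^{i-1}$ out of the numerator, i.e. $Q^{\binom{\ell}{2}}\,\frac{(Q^{h-\ell+1};Q)_\ell}{(Q;Q)_\ell}=Q^{\binom{\ell}{2}}\genfrac[]{0pt}{}{h}{\ell}_Q$, a $Q$-binomial coefficient (which vanishes for $\ell>h$, consistent with the truncation). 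Hence the six-vertex side equals $\sum_{\ell\ge 0}\genfrac[]{0pt}{}{h}{\ell}_Q Q^{\binom{\ell}{2}}\zeta^\ell$, which by the finite $q$-binomial theorem is $\prod_{i=0}^{h-1}(1+\zeta Q^i)=\frac{(-\zeta;Q)_\infty}{(-\zeta Q^h;Q)_\infty}=\prod_{i\ge 0}\frac{1+\zeta Q^i}{1+\zeta Q^{h+i}}$. Taking $\E_\vv$ of both sides (legitimate as a finite sum, or by analyticity in $\zeta$) yields \eqref{eq:match2}.

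Finally, for the reformulation \eqref{eq:match3} I would divide \eqref{eq:match2} through by $\prod_{i\ge 0}(1+\zeta Q^i)=\prod_{j\ge 0}(1+\zeta t^j)$ (using $Q=t$), which is a nonzero absolutely convergent product for $\zeta\notin -Q^{\Z_{\le 0}}$, so division is valid termwise inside the expectations. On the left this produces $\E_\vv\prod_{i\ge 0}(1+\zeta Q^{\HT(M,N)+i})^{-1}$; on the right, $\prod_{j=1}^N(1+\zeta q^{\la_j}t^{N-j})\big/\prod_{j\ge 0}(1+\zeta t^j)$, and reindexing $j\mapsto N-j$ in the numerator and absorbing the tail $j\ge N$ of the denominator via the convention $q^{\la_{-m}}=0$ gives $\prod_{j\ge 0}\frac{1+\zeta q^{\la_{N-j}}t^{j}}{1+\zeta t^j}$, as claimed.

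The computations here are all routine; the only point needing a little care — and the place I would be most careful — is the interchange of expectation and the infinite product in passing to \eqref{eq:match3}. One should justify it either by noting that \eqref{eq:match2} is an identity of polynomials in $\zeta$ and then dividing formally, and separately checking convergence of both infinite products for $\zeta$ in the stated domain (the left product converges since $0<Q<1$; the right since $\la_{N-j}=0$ for $j$ large, making all but finitely many factors equal to $1$), or by a dominated-convergence argument using that $0\le \HT(M,N)\le N$ and $0\le \la_j$ are uniformly bounded in the relevant sense. I expect this to be the main (mild) obstacle; everything else is the $q$-binomial theorem.
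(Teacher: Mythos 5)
Your proposal is correct and is exactly the route the paper takes: the paper's entire proof is the remark preceding the corollary ("multiplying both sides of \eqref{eq:match1} by $\zeta^\ell$, summing over $0\le\ell\le N$, and using the q-binomial theorem"), and you have supplied the details accurately, including the identification of the six-vertex side with $Q^{\binom{\ell}{2}}\genfrac[]{0pt}{}{h}{\ell}_Q\zeta^\ell$ and the harmless truncation at $\ell>h$. The passage to \eqref{eq:match3} by dividing by the deterministic convergent product $\prod_{i\ge0}(1+\zeta Q^i)$ is also handled correctly; no dominated-convergence argument is actually needed since both products have only finitely many factors different from the geometric tail.
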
 

The advantage of \eqref{eq:match3} over the equivalent polynomial identity \eqref{eq:match2} is that both observables in \eqref{eq:match3} take values between 0 and 1, which will become useful in the next section. 

\section{Asymptotic equivalence}\label{sc:equivalence}
The goal of this section is to extract asymptotic information about underlying probability measures from the observables in \eqref{eq:match3}. 

\begin{definition}\label{df:spread} Let $\{\eta_n\}_{n\ge 1}$ be a sequence of real-valued random variables. We say that this sequence \emph{spreads} as $n\to\infty$ if 
$$
\lim_{n\to\infty}\sup_{x\in\R} \Prob\{x<\eta_n\le x+1\}=0.
$$ 
Equivalently, the above condition says that the chance of finding $\eta_n$ in an interval of given (finite) length goes to zero as $n\to\infty$, uniformly in the location of the interval. 

We also say that a sequence $\{F_n(x)\}_{n\ge 1}$ of non-decreasing functions $F_n:\R\to \R$ \emph{spreads} if 
$$
\lim_{n\to\infty}\sup_{x\in\R} (F_n(x+1)-F_n(x))=0.
$$
Clearly, $\{\eta_n\}_{n\ge 1}$ spreads if and only if the corresponding sequence of cumulative distribution functions $\{F_{\eta_n}(x)=\Prob\{\eta_n\le x\}\}_{n\ge 1}$ spreads. 

The definition naturally extends to families indexed by more general index sets with a well-defined notion of a limiting point (for example for an index $a\in\R$, $n\to\infty$ can be replaced by $a\to +\infty$). \footnote{A more formal definition could be given in terms of convergent filter bases, but we won't need this level of generality.}
\end{definition}

\begin{definition}\label{df:equiv} Two sequences $\underline{\eta}=\{\eta_n\}_{n\ge 1}$ and $\underline{\zeta}=\{\zeta_n\}_{n\ge 1}$ of real-valued random variables are said to be  \emph{asymptotically equivalent} if 
\begin{itemize}
\item $\underline{\eta}$ spreads if and only if $\underline{\zeta}$ spreads;
\item assuming $\underline{\eta}$ and $\underline{\zeta}$ spread, 
$$
\lim_{n\to\infty} \sup_{x\in\R}\,(\Prob\{\eta_n\le x\}-\Prob\{\zeta_n\le x\})=0. 
$$
\end{itemize} 

Similarly, a sequence $\underline{\eta}=\{\eta_n\}_{n\ge 1}$ of real-valued random variables and a sequence $\underline F=\{F_n(x)\}_{n\ge 1}$ of non-decreasing functions $F_n:\R\to \R$ are \emph{asymptotically equivalent} if 
\begin{itemize}
\item $\underline{\eta}$ spreads if and only of $\underline F$ spreads;
\item assuming $\underline{\eta}$ and $\underline{F}$ spread, 
$$
\lim_{n\to\infty} \sup_{x\in\R}\,(\Prob\{\eta_n\le x\}-F_n(x))=0. 
$$
\end{itemize} 
This definition also extends to more general index sets with a notion of the limiting point. 
\end{definition} 

\begin{proposition}\label{pr:equiv} Let $\{\eta_n\}_{n\ge 1}$ be a sequence of real-valued random variables, and let $\{\phi_{n,x}\}_{n\ge 1, x\in \R}$ be another family of real-valued random variables, with $\phi_{n,x}$ defined on the same probability space $\Omega_n$ as $\eta_n$. Assume that 
\begin{enumerate}
\item $0\le \phi_{n,x}\le \phi_{n,y}\le 1$ for any $n\ge 1$ and $x\le y$.
\item If $\eta_n-x\to +\infty$, then $\phi_{n,x}\to 0$, uniformly in $n\ge 1$ and $x\in \R$. More formally, for any $\varepsilon>0$ there exists $M>0$ such that on $\{(x,\omega)\in\R\times\Omega_n: \eta_n(\omega)-x>M\}$ we have $\phi_{n,x}(\omega)<\varepsilon$.
\item  If $\eta_n-x\to -\infty$, then $\phi_{n,x}\to 1$, uniformly in $n\ge 1$ and $x\in \R$.
\item There exists an independent of $n$ constant $c>0$ such that on $\{(x,\omega)\in\R\times\Omega_n:x<\eta_n\le x+1\}$ we have $\phi_{n,x+1}(\omega)-\phi_{n,x}(\omega)\ge c$. 
\end{enumerate}
Then the sequences of random variables $\{\eta_n\}_{n\ge 1}$ and of non-decreasing functions $\{F_n(x):=\E \phi_{n,x}\}_{n\ge 1}$ are asymptotically equivalent. 
\end{proposition}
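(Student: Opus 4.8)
The plan is to show that $\phi_{n,x}$ is, uniformly in $n$ and $x$, a good approximation to the indicator $\mathbf 1_{\eta_n\le x}$, so that $F_n(x)=\E\phi_{n,x}$ stays uniformly close to the distribution function $F_{\eta_n}(x)=\Prob\{\eta_n\le x\}$ once $\{\eta_n\}$ spreads; the equivalence of the two spreading properties then follows, the one genuinely non-routine direction being that spreading of $\{F_n\}$ forces spreading of $\{\eta_n\}$, which is exactly where hypothesis (4) is used. Before starting, note that by (1) the map $x\mapsto\phi_{n,x}$ is non-decreasing with values in $[0,1]$, so $F_n$ is a non-decreasing $[0,1]$-valued function, and by (2), (3) (applied with $x\to\mp\infty$ for fixed $n$) one has $F_n(-\infty)=0$ and $F_n(+\infty)=1$, so $F_n$ is itself a distribution function and the statement makes sense.

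The heart of the argument is a quantitative version of (2), (3). For $M>0$ set
\[
\eps(M):=\sup\{\phi_{n,x}(\omega):n\ge1,\ \eta_n(\omega)-x>M\},\qquad \delta(M):=\inf\{\phi_{n,x}(\omega):n\ge1,\ \eta_n(\omega)-x<-M\};
\]
hypotheses (2) and (3) say precisely that $\eps(M)\to0$ and $\delta(M)\to1$ as $M\to\infty$. From (1) and these definitions one reads off the pointwise sandwich $\delta(M)\,\mathbf 1_{\eta_n<x-M}\le\phi_{n,x}\le\mathbf 1_{\eta_n\le x+M}+\eps(M)$, valid for all $n$ and $x$. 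Taking expectations and bounding an increment of $F_{\eta_n}$ over an interval of length $\le M+1$ by $(\lceil M\rceil+1)\Delta_n$, where $\Delta_n:=\sup_{y\in\R}\Prob\{y<\eta_n\le y+1\}$, one obtains, for every $M>0$,
\[
\sup_{x\in\R}\bigl|F_n(x)-F_{\eta_n}(x)\bigr|\le(\lceil M\rceil+1)\,\Delta_n+(1-\delta(M))+\eps(M).
\]

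If $\{\eta_n\}$ spreads then $\Delta_n\to0$, so letting first $n\to\infty$ and then $M\to\infty$ in the last display gives $\sup_x|F_n(x)-F_{\eta_n}(x)|\to0$; this in particular yields $\sup_x(\Prob\{\eta_n\le x\}-F_n(x))\to0$, and since $\bigl|(F_n(x+1)-F_n(x))-(F_{\eta_n}(x+1)-F_{\eta_n}(x))\bigr|\le 2\sup_{y\in\R}|F_n(y)-F_{\eta_n}(y)|$, the family $\{F_n\}$ spreads as well. Conversely, assume $\{F_n\}$ spreads. Since $\phi_{n,x+1}-\phi_{n,x}\ge0$ everywhere by (1), hypothesis (4) gives $F_n(x+1)-F_n(x)=\E[\phi_{n,x+1}-\phi_{n,x}]\ge c\,\Prob\{x<\eta_n\le x+1\}$, hence $\Delta_n\le c^{-1}\sup_x(F_n(x+1)-F_n(x))\to0$ and $\{\eta_n\}$ spreads. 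Together with the displayed estimate this gives asymptotic equivalence in the sense of Definition \ref{df:equiv}.

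There is no deep obstacle. The only points needing care are that the transition region $\{\eta_n\approx x\}$ must be controlled uniformly in both $n$ and $x$ --- which is why (2) and (3) are stated uniformly --- and some bookkeeping of strict versus non-strict inequalities when passing between $\Prob\{\eta_n<x-M\}$ and values of $F_{\eta_n}$, which I sidestep above with the crude telescoping bound for increments of $F_{\eta_n}$. It is also worth noting the division of labour among the hypotheses: (1) makes $F_n$ monotone and lets indicators be freely inserted or dropped, (2)--(3) control the two tails of the approximation, and (4) is precisely what is needed for the implication ``$\{F_n\}$ spreads $\Rightarrow\{\eta_n\}$ spreads''.
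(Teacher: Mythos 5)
Your proof is correct and follows essentially the same route as the paper's: the tails are controlled by hypotheses (2)--(3), the middle region $\{|\eta_n-x|\le M\}$ by the spreading of $\{\eta_n\}$, and hypothesis (4) is used in exactly the same way for the implication that spreading of $\{F_n\}$ forces spreading of $\{\eta_n\}$. The only difference is organizational: you compress the paper's two separate three-way integral decompositions into a single quantitative sandwich estimate $\sup_x|F_n(x)-F_{\eta_n}(x)|\le(\lceil M\rceil+1)\Delta_n+(1-\delta(M))+\eps(M)$, from which both conclusions follow at once.
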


\begin{remark}\label{rm:as} As the above statement deals with random variables, all the above conditions need to be understood in the almost sure context. However, if we think of $\{\eta_n\}$ and $\{\phi_{n,x}\}$ as of everywhere defined functions on $\Omega_n$ and assume that the conditions are satisfied everywhere, not just up to measure zero subsets, then these conditions will remain satisfied for any probability distributions on $\Omega_n$'s (because they simply do not depend on those). 
\end{remark}

\begin{example}\label{ex:function} Here is one situation when the assumptions of the above proposition hold. 
Let $\Phi:\R\to\R$ be a continuous, strictly increasing function such that $\lim_{x\to-\infty}\Phi(x)=0$ and $\lim_{x\to+\infty} \Phi(x)=1$. Let $\{\eta_n\}_{n\ge 1}$ be an arbitrary sequence of random variables. Define $\phi_{n,x}=\Phi(x-\eta_n)$ for all $n\ge 1$ and $x\in\R$. Then one readily sees that all the assumptions of Proposition \ref{pr:equiv} are satisfied (the last one follows from the fact that $\min_{-1\le y\le 0}(\Phi(y+1)-\Phi(y))>0$ due to strict monotonicity). Hence, $\{\eta_n\}_{n\ge 1}$ and $\{F_n(x)=\E \Phi(x-\eta_n)\}_{n\ge 1}$ are asymptotically equivalent. Note that this leads to a nontrivial conclusion only when $\{\eta_n\}_{n\ge 1}$ spreads as $n\to\infty$. 
\end{example}

The above example is similar to \cite[Lemma 4.1.39]{BC}. We will also see other, more involved examples later in this section. 

\begin{proof}[Proof of Proposition \ref{pr:equiv}] Assume that $\underline\eta=\{\eta_n\}_{n\ge 1}$ spreads. For any $x\in\R$, $M>0$ we write
\begin{multline*}
F_n(x+1)-F_n(x)=\E(\phi_{n,x+1}-\phi_{n,x})=\int_{\omega:\eta_n(\omega)\le -M+x} (\phi_{n,x+1}(\omega)-\phi_{n,x}(\omega))d\omega\\+\int_{\omega:-M+x<\eta_n(\omega)\le M+x} (\phi_{n,x+1}(\omega)-\phi_{n,x}(\omega))d\omega+\int_{\omega:M+x<\eta_n(\omega)} (\phi_{n,x+1}(\omega)-\phi_{n,x}(\omega))d\omega.
\end{multline*}
According to the assumptions, we can find $M>0$ such that the first and the third integrals are small, uniformly in $n$ and $x$. The fact that $\underline\eta$ spreads implies that for a fixed $M$ and large enough $n$, $\Prob\{-M+x<\eta_n(\omega)\le M+x\}$ is arbitrarily small (uniformly in $x$), which leads to the smallness of the second integral due to the boundedness of $\phi_{n,x}$. 
Hence, the whole expression can be made small uniformly in $x$ by choosing an appropriate $M$ and large enough $n$, and this means that $\underline F=\{F_n(x)\}_{n\ge 1}$ spreads. 

Assume that $\underline\eta$ does not spread. This means that for any large enough $n$ there exists $x(n)\in\R$ such that $\Prob\{x(n)< \eta_n\le x(n)+1\}\ge c'>0$. Then, using the last assumption, 
\begin{multline*}
\E(\phi_{n,x(n)+1}-\phi_{n,x(n)})\ge \int_{\omega:x(n)<\eta_n(\omega)\le x(n)+1} (\phi_{n,x(n)+1}(\omega)-\phi_{n,x(n)}(\omega))d\omega\\ \ge c \Prob\{x(n)< \eta_n\le x(n)+1\}\ge cc'>0,
\end{multline*}
which means that $\underline F$ also does not spread. 

Assume now that both $\underline\eta$ and $\underline F$ spread. Then writing 
\begin{equation*}
F_n(x)=\E\phi_{n,x}=\int_{\omega:\eta_n(\omega)\le -M+x} \phi_{n,x}+\int_{\omega:-M+x<\eta_n(\omega)\le M+x} \phi_{n,x}+\int_{\omega:M+x<\eta_n(\omega)} \phi_{n,x}
\end{equation*}
and using the assumptions, we can choose $M>0$ that makes the third term arbitrarily small, and the first term arbitrarily close to $\Prob\{\eta_n\le -M+x\}$. On the other hand, the fact that $\underline\eta$ spreads implies that $\Prob\{\omega:-M+x<\eta_n(\omega)\le M+x\}\to 0$ as $n\to\infty$. This shows that for a certain choice of $M$ and sufficiently large $n$, $|F_n(x)-\Prob\{\eta_n\le x\}|$ is arbitrarily small uniformly in $x\in\R$, as required. 
\end{proof}

\begin{proposition}\label{pr:column} Take $\Omega_n=\{\la\in\Y:\ell(\la)\le n\}$ and $q\in[0,1)$, $t\in(0,1)$. Then 
$$
\eta_n:\Omega_n\to \Z_{\le 0},\quad \eta_n:\la\mapsto \ell(\la)-n, \qquad\qquad \phi_{n,x}:\Omega_n\to\R, \quad \phi_{n,x}:\la\mapsto \prod_{j\ge 0} \frac{1+q^{\la_{n-j}}t^{j+x}}{1+t^{j+x}}\,,
$$
where $\ell(\la)$ is the number of nonzero parts in $\la$ (the \emph{length} of $\la$) and $q^{\la_{-m}}=0$ for $m\ge 0$, satisfy the conditions of Proposition \ref{pr:equiv} (see also Remark \ref{rm:as}). Consequently, 
$$
\{\eta_n=\ell(\la)-n\}_{n\ge 1} \qquad \textrm{and}\qquad \left\{F_n(x)=\E \prod_{j\ge 0} \frac{1+q^{\la_{n-j}}t^{j+x}}{1+t^{j+x}}\right\}_{n\ge 1}
$$
are asymptotically equivalent as $n\to\infty$ in the sense of Definition \ref{df:equiv}, for an arbitrary choice of probability distributions on the $\Omega_n$'s. 
\end{proposition}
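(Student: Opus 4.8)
The plan is to verify the four hypotheses of Proposition \ref{pr:equiv} for the explicit $\eta_n$ and $\phi_{n,x}$, and then invoke that proposition together with Remark \ref{rm:as}, which removes any dependence on the choice of probability distribution on $\Omega_n$. The first step is to put $\phi_{n,x}(\la)$ in a transparent form. Since $\la$ has at most $n$ parts, $\la_{n-j}=0$ whenever $0\le j<n-\ell(\la)$, so the corresponding factor $\frac{1+q^{\la_{n-j}}t^{j+x}}{1+t^{j+x}}$ equals $1$; and by the convention $q^{\la_{n-j}}=0$ for $j\ge n$, the factor there equals $\frac1{1+t^{j+x}}$. Writing $\eta_n=\ell(\la)-n$ and $y=x-\eta_n$, one obtains $\phi_{n,x}(\la)=\prod_{k\ge 0}\frac{1+q^{\la_{\ell(\la)-k}}t^{k+y}}{1+t^{k+y}}$, where the surviving coefficients $q^{\la_{\ell(\la)-k}}$ all lie in $[0,1]$ and are moreover $\le q$ whenever $\la_{\ell(\la)-k}$ is a genuine nonzero part (and one sets $\la_m$, $m\le 0$, to have $q^{\la_m}=0$, which also covers $\la=\varnothing$).

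For hypothesis (1): each factor $\frac{1+at^{j+x}}{1+t^{j+x}}$ with $a\in[0,1]$ lies in $(0,1]$, and the sign of $\frac{d}{dx}\log\frac{1+at^{j+x}}{1+t^{j+x}}=t^{j+x}\ln t\cdot\frac{a-1}{(1+at^{j+x})(1+t^{j+x})}$ shows it is non-decreasing in $x$; the product converges (since $1-\frac{1+az}{1+z}\le z$ and $\sum_j t^{j+x}<\infty$) and inherits both properties, giving $0\le\phi_{n,x}\le\phi_{n,y}\le1$ for $x\le y$. For (2) and (3) I would sandwich $\phi_{n,x}(\la)$ between two functions of $y=x-\eta_n$ alone — this uniformity in $n$ is exactly what makes the argument go through. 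Bounding the surviving coefficients by $1$ from below, and by $q$ (on the nonzero parts) or $0$ (on the tail $j\ge n$) from above, yields
\[
(t^{\,y};t)_\infty\ \le\ \phi_{n,x}(\la)\ \le\ \prod_{k\ge 0}\frac{1+qt^{k+y}}{1+t^{k+y}}
\]
for $y\ge 0$ in the left bound and for all $y$ in the right bound. The left $q$-Pochhammer tends to $1$ as $y\to+\infty$, while each factor of the right product is $\le\frac{1+q}{2}<1$ once $t^{k+y}\ge1$ (of which there are $\approx -y$ many), so the product tends to $0$ as $y\to-\infty$; both bounds are monotone in $y$, which gives precisely the uniform convergences in (2) and (3).

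For hypothesis (4) I would single out one genuinely $x$-dependent factor: the one at index $j_0=n-\ell(\la)$, namely $g(x)=\frac{1+a\,t^{x-\eta_n}}{1+t^{x-\eta_n}}$ with $a=q^{\la_{\ell(\la)}}\in[0,q]$ when $\ell(\la)\ge1$ (as $\la_{\ell(\la)}\ge1$), and $a=0$ when $\la=\varnothing$. Writing $\phi_{n,x}(\la)=g(x)R(x)$ with $R$ again non-decreasing and valued in $(0,1]$, one has
\[
\phi_{n,x+1}-\phi_{n,x}=g(x+1)\bigl(R(x+1)-R(x)\bigr)+R(x)\bigl(g(x+1)-g(x)\bigr)\ \ge\ R(x)\bigl(g(x+1)-g(x)\bigr).
\]
On the event $\{x<\eta_n\le x+1\}$ we get $y:=x-\eta_n\in[-1,0)$, so $g(x+1)-g(x)=(1-a)\bigl(\tfrac1{1+t^{y+1}}-\tfrac1{1+t^{y}}\bigr)$ is bounded below by an explicit constant depending only on $q,t$ (using $1-a\ge1-q>0$ and $t^{y}\ge1$, $t^{y+1}\le1$), while $R(x)\ge\prod_{k\ge1}\frac1{1+t^{k+y}}=\frac1{(-t^{y+1};t)_\infty}\ge\frac1{(-1;t)_\infty}>0$ since $y+1\ge0$; here $(-1;t)_\infty<\infty$ because $t\in(0,1)$. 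This produces the required uniform $c>0$. The main obstacle is this hypothesis (4): one must choose the isolated factor so that it is actually $x$-dependent — it has to be attached to the smallest nonzero part, or, for $\la=\varnothing$, to the first convention-driven tail factor at $j=n$ — and keep the resulting constant independent of $n$ and of the position of the unit window, which forces all the estimates to be run in the variable $y=x-\eta_n$ rather than $x$. Once (1)--(4) are established as properties of the functions $\eta_n$ and $\phi_{n,x}$, Proposition \ref{pr:equiv} and Remark \ref{rm:as} give the asserted asymptotic equivalence for an arbitrary choice of distributions on the $\Omega_n$.
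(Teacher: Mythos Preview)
Your proof is correct and follows essentially the same route as the paper: rewrite $\phi_{n,x}$ in the shifted variable $y=x-\eta_n$, verify monotonicity factorwise, sandwich by functions of $y$ to get (2)--(3), and for (4) isolate the $k=0$ factor attached to the last nonzero part (or the first tail factor when $\la=\varnothing$). The only cosmetic differences are that the paper treats $q=0$ and $\la=\varnothing$ as separate cases and bounds the ratio $\phi_{n,x+1}/\phi_{n,x}$ rather than using your additive decomposition $g(x+1)(R(x+1)-R(x))+R(x)(g(x+1)-g(x))$; your version is slightly more uniform but the substance is identical.
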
 
\begin{proof} When $q=0$ we have 
$$
\prod_{j\ge 0} \frac{1+q^{\la_{n-j}}t^{j+x}}{1+t^{j+x}}=\prod_{j\ge 0} \frac{1}{1+t^{j+x+n-\ell(\la)}}=\prod_{j\ge 0} \frac{1}{1+t^{(x-\eta_n)+j}},
$$
and the result follows from Example \ref{ex:function}. 

Assume that $q>0$. Let us check the conditions of Proposition \ref{pr:equiv} one by one. 

The inequalities $0\le \phi_{n,x}\le \phi_{n,y}\le 1$ for $x\le y$ hold, because each factor in the definition of $\phi_{n,x}$ is a non-decreasing function in $x$. 

If $\eta_n-x=\ell(\la)-n-x>M$ then 
\begin{multline*}
 \prod_{j\ge 0} \frac{1+q^{\la_{n-j}}t^{j+x}}{1+t^{j+x}}=\prod_{j\ge n-\ell(\la)} \frac{1+q^{\la_{n-j}}t^{j+x}}{1+t^{j+x}}= \prod_{j\ge 0} \frac{1+q^{\la_{\ell(\la)-j}}t^{j-\ell(\la)+n+x}}{1+t^{j-\ell(\la)+n+x}}\le \prod_{j=0}^M \frac{1+qt^{j-\ell(\la)+n+x}}{1+t^{j-\ell(\la)+n+x}}\\=
 \prod_{j=0}^M \frac{t^{-j+\ell(\la)-n-x}+q}{t^{-j+\ell(\la)-n-x}+1}\le \left(\frac{1+q}{2}\right)^M
\end{multline*}
because $(y+q)/(y+1)\le (1+q)/2$ for $0\le y\le 1$. This implies the second condition. 

If $\eta_n-x=\ell(\la)-n-x<-M$ then 
$$
\prod_{j\ge 0} \frac{1+q^{\la_{n-j}}t^{j+x}}{1+t^{j+x}}=\prod_{j\ge n-\ell(\la)} \frac{1+q^{\la_{n-j}}t^{j+x}}{1+t^{j+x}}\ge \prod_{j\ge 0} \frac{1}{1+t^{j-\ell(\la)+n+x}}>
\prod_{j\ge 0} \frac{1}{1+t^{j+M}}\,,
$$
and the last expression clearly converges to $1$ as $M\to \infty$. This implies the third condition. 

Finally, for $x<\ell(\la)-n\le x+1$ and $\ell(\la)>0$ we have
$$
\phi_{n,x}=\prod_{j\ge 0} \frac{1+q^{\la_{\ell(\la)-j}}t^{j-\ell(\la)+n+x}}{1+t^{j-\ell(\la)+n+x}}\ge \prod_{j\ge 0} \frac{1}{1+t^{j-\ell(\la)+n+x}}\ge \prod_{j\ge 0} \frac{1}{1+t^{j-1}}=const>0,
$$
and 
\begin{multline}\label{eq:ineq}
\frac{\phi_{n,x+1}}{\phi_{n,x}}=\prod_{j\ge n-\ell(\la)} \frac{1+q^{\la_{n-j}}t^{j+x+1}}{1+t^{j+x+1}}\left(\frac{1+q^{\la_{n-j}}t^{j+x}}{1+t^{j+x}}\right)^{-1}\\ \ge \frac{1+q^{\la_{\ell(\la)}}t^{-\ell(\la)+n+x+1}}{1+t^{-\ell(\la)+n+x+1}}\left(\frac{1+q^{\la_{\ell(\la)}}t^{-\ell(\la)+n+x}}{1+t^{-\ell(\la)+n+x}}\right)^{-1}=\frac{1+q^{\la_{\ell(\la)}}t^{y}}{1+t^{y}}\left(\frac{1+q^{\la_{\ell(\la)}}t^{y-1}}{1+t^{y-1}}\right)^{-1},
\end{multline}
where $y=-\ell(\la)+n+x+1\in [0,1)$, and we used the fact that each factor in the definition of $\phi_{n,x}$ is a non-decreasing function of $x$. Since for $\ell(\la)\ge 1$, 
$$
\frac{1+q^{\la_{\ell(\la)}}t^{y}}{1+t^{y}}=q^{\la_{\ell(\la)}}+\frac{1-q^{\la_{\ell(\la)}}}{1+t^{y}}
$$ 
is actually a strictly increasing function of $y$, the last expression of \eqref{eq:ineq} is bounded from below by a constant that is strictly greater than 1 for $y\in [0,1)$ . This implies that ${\phi_{n,x+1}}-{\phi_{n,x}}$ is uniformly bounded from below by a positive constant when evaluated on any nonempty partition. For the empty partition, we have $\ell(\la)=0$, thus $x<\ell(\la)-n\le x+1$ reads $x<-n\le x+1$, and 
\begin{multline*}
{\phi_{n,x+1}}(\varnothing)-{\phi_{n,x}}(\varnothing)=\prod_{j\ge n}\frac 1{1+t^{j+x+1}}-\prod_{j\ge n}\frac 1{1+t^{j+x}}\\=\left(1-\frac1{1+t^{n+x}}\right)\prod_{j\ge 0}\frac 1{1+t^{j+n+x+1}}\ge const>0, 
\end{multline*}
where the constant does not depend on $x\in [-n-1,-n)$. 

Thus, we have verified all of the assumptions of Proposition \ref{pr:equiv}, and the result follows.
\end{proof}

In the case $q=t$, the function $\phi_{n,x}$ from Proposition \ref{pr:column} can be written as
$\prod_{i\in I}(1+q^{x+i})^{-1}$, where $I=\Z_{\ge 0}\setminus \{\la_n, \la_{n-1}+1,\dots,\la_1+(n-1)\}$; note that $\min(I)=n-\ell(\la)$. With this form of $\phi_{n,x}$, the result and its proof are actually independent of the nature of $I\subset\Z_{\ge 0}$. Let us state the corresponding claim separately as it will be useful in \cite{BO}. 

\begin{corollary} Take $q\in(0,1)$ and let $\{J_n\}_{n\ge 1}$ be a sequence of random subsets of $\Z_{\ge 0}$ (equivalently, a sequence of simple random point processes on $\Z_{\ge 0}$). Define 
$$
\eta_n=-\min (J_n), \qquad \phi_{n,x}=\prod_{j\in J_n}\frac 1{1+q^{x+j}}\,,\qquad n\ge 1,\ x\in \R. 
$$
Then these random variables satisfy the assumptions of Proposition \ref{pr:equiv}, and hence the sequences $\{\eta_n=-\min (J_n)\}_{n\ge 1}$ and $\{F_n(x)=\E \prod_{j\in J_n}(1+q^{x+j})^{-1}\}_{n\ge 1}$ are asymptotically equivalent as $n\to\infty$ in the sense of Definition \ref{df:equiv}. 
\end{corollary}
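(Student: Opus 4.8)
\emph{Proof plan.} The plan is to verify, one at a time, the four hypotheses of Proposition \ref{pr:equiv} for the pair $(\eta_n,\phi_{n,x})$ in the statement; once this is done the asymptotic equivalence of $\{\eta_n\}$ and $\{F_n\}$ is immediate. As the remark preceding the corollary indicates, this is exactly the argument in the proof of Proposition \ref{pr:column} specialized to $q=t$, with the complementary set $I$ there replaced by an arbitrary (random) subset $J_n\subset\Z_{\ge 0}$; the only feature of $I$ actually used in that proof is the identity $\min I=n-\ell(\la)$, which here becomes $\min J_n=-\eta_n$. I write $m_n:=\min J_n\ge 0$, so $\eta_n=-m_n$, and I tacitly assume $J_n\neq\varnothing$ so that $\eta_n$ is finite. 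Since none of the conditions of Proposition \ref{pr:equiv} depend on the law of $J_n$, I may invoke Remark \ref{rm:as} and argue pointwise on $\Omega_n=\{J\subset\Z_{\ge 0}:J\neq\varnothing\}$.

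Conditions (1)--(3) are the easy ones. For (1): each factor $(1+q^{x+j})^{-1}$ lies in $(0,1)$ and is non-decreasing in $x$ because $q\in(0,1)$; since $\sum_{j\in J_n}q^{x+j}\le q^{x}/(1-q)<\infty$, the product $\phi_{n,x}$ converges to a value in $(0,1]$, and monotonicity in $x$ passes to the product. For (2) and (3) I keep only the single factor indexed by $j=m_n$: it equals $(1+q^{x+m_n})^{-1}=(1+q^{-(\eta_n-x)})^{-1}$, so $\phi_{n,x}\le(1+q^{-(\eta_n-x)})^{-1}<(1+q^{-M})^{-1}$ whenever $\eta_n-x>M$, and this tends to $0$ as $M\to\infty$ uniformly in $n,x$, giving (2). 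For (3), when $\eta_n-x<-M$ one has $m_n+x>M$, hence every $j\in J_n$ satisfies $x+j\ge M+(j-m_n)$ with $j-m_n\ge 0$, so $\phi_{n,x}\ge\prod_{k\ge 0}(1+q^{M+k})^{-1}$, which tends to $1$ as $M\to\infty$.

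The one step needing real care is condition (4). On the event $\{x<\eta_n\le x+1\}$ one has $y:=x+m_n\in[-1,0)$. I would write $\phi_{n,x+1}/\phi_{n,x}=\prod_{j\in J_n}\frac{1+q^{x+j}}{1+q^{x+j+1}}$; every factor is $\ge 1$ (as $q<1$), and the factor with $j=m_n$ is $\frac{1+q^{y}}{1+q^{y+1}}$, which is bounded below by a constant $1+\delta>1$ depending on $q$ alone as $y$ ranges over the compact set $[-1,0]\supset[-1,0)$. At the same time, on this event $x+j\ge -1+(j-m_n)$ for each $j\in J_n$, so $\phi_{n,x}\ge\prod_{k\ge 0}(1+q^{k-1})^{-1}=:c_0>0$. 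Hence $\phi_{n,x+1}-\phi_{n,x}=\phi_{n,x}\bigl(\phi_{n,x+1}/\phi_{n,x}-1\bigr)\ge c_0\,\delta>0$, uniformly in $n$, $x$ and $\omega$ on this event, which is condition (4) with $c=c_0\delta$. With (1)--(4) in hand, Proposition \ref{pr:equiv} applies verbatim. I expect no genuine obstacle; the only real work is the uniform bookkeeping in (4), where one simultaneously peels off a factor of the ratio bounded away from $1$ and bounds $\phi_{n,x}$ away from $0$, all with constants depending on $q$ alone, which is precisely what makes the conclusion hold for an arbitrary sequence of random point processes $J_n$.
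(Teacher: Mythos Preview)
Your proof is correct and follows exactly the approach the paper indicates: the paper omits the proof, noting only that it is very similar to that of Proposition~\ref{pr:column} (the $q=t$ case there, with the set $I$ replaced by an arbitrary $J_n$ and $\min I=n-\ell(\lambda)$ replaced by $\min J_n=-\eta_n$), and your verification of conditions (1)--(4) of Proposition~\ref{pr:equiv} is precisely that argument written out in full.
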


The proof is very similar to that of Proposition \ref{pr:equiv} (and coincides with it for $J_n$ of the form $\Z_{\ge 0}\setminus \{\la_n, \la_{n-1}+1,\dots,\la_1+(n-1)\}$ with a random $\la\in\Y$ of length $\le n$), and we omit it. 

The reader may have noticed that we have excluded $t=0$ from the statement of Proposition \ref{pr:column}, the reason being that $\phi_{n,x}$ in that case makes little sense. There is, however, a slightly different family of $\phi_{n,x}$ that captures the asymptotic behavior of $\ell(\la)$ in a similar fashion. 

\begin{proposition}\label{pr:t0} Take $\Omega_n=\{\la\in\Y:\ell(\la)\le n\}$ and $q\in[0,1)$. Then 
$$
\eta_n:\Omega_n\to \Z_{\le 0},\quad \eta_n:\la\mapsto \ell(\la)-n, \qquad\qquad \phi_{n,x}:\Omega_n\to\R, \quad \phi_{n,x}:\la\mapsto \prod_{0\le j< -x} {q^{\la_{n-j}}},
$$
where an empty product is assumed to be equal to 1, satisfy the conditions of Proposition \ref{pr:equiv} (see also Remark \ref{rm:as}). Consequently, 
$$
\{\eta_n=\ell(\la)-n\}_{n\ge 1} \qquad \textrm{and}\qquad \biggl\{F_n(x)=\E \prod_{0\le j< -x} q^{\la_{n-j}}\biggr\}_{n\ge 1}
$$
are asymptotically equivalent as $n\to\infty$ in the sense of Definition \ref{df:equiv}, for an arbitrary choice of probability distributions on the $\Omega_n$'s. 
\end{proposition}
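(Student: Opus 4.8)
The plan is to verify that $\eta_n$ and $\phi_{n,x}$ as defined satisfy the four hypotheses of Proposition \ref{pr:equiv}; since asymptotic equivalence in the sense of Definition \ref{df:equiv} is precisely the conclusion of that proposition, nothing more is needed. As the hypotheses refer only to $\eta_n$ and $\phi_{n,x}$ and not to the underlying measures (cf.\ Remark \ref{rm:as}), it suffices to argue pointwise on $\Omega_n=\{\la:\ell(\la)\le n\}$.

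The organizing observation is that $\phi_{n,x}(\la)=\prod_{0\le j<-x}q^{\la_{n-j}}$ has all its factors equal to $1$ except for the indices $j$ with $n-j\le\ell(\la)$, i.e.\ $j\ge n-\ell(\la)$ (and the convention $q^{\la_{-m}}=0$ makes the whole product vanish as soon as an index $n-j$ drops to $0$ or below). Writing the integer window $\{0\le j<-x\}$ as $\{0,\dots,\lceil-x\rceil-1\}$, one therefore gets $\phi_{n,x}\equiv 1$ whenever $x\ge\eta_n$, and for $x<\eta_n$ exactly $\lceil-x\rceil-(n-\ell(\la))\ge 1$ factors are nontrivial, each of the form $q^{\la_{n-j}}$ with $\la_{n-j}\ge 1$ (hence $\le q$), unless one of them is the vanishing factor $q^{\la_0}=0$. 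I would first treat $q=0$ on its own: there $q^{\la_{n-j}}=0$ for every nonzero part, so $\phi_{n,x}(\la)=\mathbf{1}_{x\ge\eta_n(\la)}$ and $F_n(x)=\Prob\{\eta_n\le x\}$ is literally the distribution function of $\eta_n$, making the asymptotic equivalence a tautology (the degenerate case of Example \ref{ex:function}).

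For $q\in(0,1)$ the conditions of Proposition \ref{pr:equiv} fall out as follows. Condition (1) is immediate: each factor lies in $(0,1]$, and increasing $x$ only deletes factors from the product. Condition (3) holds in the strong form $\phi_{n,x}\equiv 1$ for $\eta_n-x\le 0$, by the observation above. For condition (2): if $\eta_n-x>M$ then the window $\{0\le j<-x\}$ contains at least $\lceil M\rceil$ integers $j\ge n-\ell(\la)$, so $\phi_{n,x}\le q^{\lceil M\rceil}\le q^M$ (or $\phi_{n,x}=0$ if the window reaches index $n$), which is $<\varepsilon$ once $M$ is large, uniformly in $n$ and $x$. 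For condition (4): on $\{x<\eta_n\le x+1\}$ one identifies the two windows exactly, $\{0\le j<-x\}=\{0,\dots,n-\ell(\la)\}$ and $\{0\le j<-x-1\}\subseteq\{0,\dots,n-\ell(\la)-1\}$, which gives $\phi_{n,x}=q^{\la_{\ell(\la)}}\le q$ when $\ell(\la)\ge 1$ (and $\phi_{n,x}(\varnothing)=0$), while $\phi_{n,x+1}=1$; hence $\phi_{n,x+1}-\phi_{n,x}\ge 1-q>0$, so condition (4) holds with $c=1-q$. Proposition \ref{pr:equiv} then yields the stated asymptotic equivalence.

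None of this is difficult; the one place to be careful is the bookkeeping with the ceiling of the real number $-x$ that pins down exactly which factors of $\phi_{n,x}$ are nontrivial, together with the boundary cases $\la=\varnothing$ and $\ell(\la)=n$, where the $q^{\la_{-m}}=0$ convention enters and where one must confirm that the clean identities $\phi_{n,x}=q^{\la_{\ell(\la)}}$ and $\phi_{n,x+1}=1$ underpinning condition (4) still hold (or degenerate to the even simpler $\phi_{n,x}(\varnothing)=\mathbf{1}_{x\ge -n}$).
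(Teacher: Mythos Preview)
Your proof is correct and follows essentially the same route as the paper's: verify conditions (1)--(4) of Proposition \ref{pr:equiv} by counting how many factors in the product $\prod_{0\le j<-x}q^{\la_{n-j}}$ are nontrivial, then invoke that proposition. If anything, your version is slightly more careful than the paper's---you treat $q=0$ and the empty partition separately and track the ceiling of $-x$ explicitly, whereas the paper's terse argument writes $\phi_{n,x}=q^{\ell(\la)}$ (presumably a typo for $q^{\la_{\ell(\la)}}$) and leaves the boundary cases implicit.
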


\begin{proof} 
If $\eta_n> x$ then in the product $\prod_{0\le j< -x} q^{\la_{n-j}}$, a total of $[\eta_n-x]$ factors corresponding to 
$$
j=n-\ell(\la),n-\ell(\la)+1,\dots,n-\ell(\la)+[\eta_n-x]-1
$$ 
all contribute nontrivial powers of $q$, which means that the product uniformly converges to 0 as $\eta_n-x\to+\infty$.

If $\eta_n<x$ then $n-\ell(\la)>-x$, and the product $\prod_{0\le j< -x} q^{\la_{n-j}}$ contains no nontrivial powers of $q$, i.e. $\phi_{n,x}=1$. 

Finally, if $x<\eta_n\le x+1$ then $\phi_{n,x+1}=1$ and $\phi_{n,x}=q^{\ell(\la)}$, thus $\phi_{n,x+1}-\phi_{n,x}\ge 1-q$. 

This implies all the assumptions of Proposition \ref{pr:equiv} and completes the proof. 
\end{proof}

We are now in a position to apply the above statements to the Macdonald measures. 

\begin{corollary}\label{cr:equiv-macd} For any sequence of specifications 
$\{\mathcal S^\MM_m\}_{m\ge 1}$  of the Macdonald measure with no nonzero 
$\beta$-parameters (see Definition \ref{df:mac-spec}), the random variables 
$\ell(\la)$ for any two pairs of parameters 
$(q_1,t_1),(q_2,t_2)\in [0,1)^2$ are asymptotically equivalent as $m\to\infty$. 

Similarly, for any two sequences of specifications 
$\{\mathcal S^\MM_m\}_{m\ge 1}$, $\{\widetilde{\mathcal S}^\MM_m\}_{m\ge 1}$ of 
the Macdonald measures whose $t$-parameters are the same, 
and whose $q$-parameters and $\beta$-parameters are related as follows:
$$
\tilde q=q^{1/k}, \qquad \{\tilde\beta_j\}=\{\beta_j\}\sqcup\{q^{1/k}\beta_j\}
\sqcup\dots\sqcup 
\{q^{(k-1)/k}\beta_j\},
$$
with some $k\in\{1,2,\dots\}$, the random variables $\ell(\la)$ are asymptotically equivalent as $m\to\infty$.
\end{corollary}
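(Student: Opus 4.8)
The plan is to reduce everything, via Propositions~\ref{pr:column} and~\ref{pr:t0}, to the $q$-independence statements of Remark~\ref{rm:independence}, using the $q=t$ (Schur) measure --- which carries no $t$-dependence at all --- as a common reference point. The first ingredient is a rewriting of the function $F_n$ of Proposition~\ref{pr:column}: for a specification $\mathcal S^\MM$ with parameters $(q,t)$, $t\in(0,1)$, combining $\prod_{j=1}^{n}(1+\zeta y_j)=\sum_{\ell=0}^{n}\zeta^\ell e_\ell(y_1,\dots,y_n)$ at $y_j=q^{\la_j}t^{n-j}$ with the telescoping used in passing from \eqref{eq:match2} to \eqref{eq:match3} gives, for $\zeta=t^x$,
\begin{equation*}
F^{(q,t)}_n(x)=\E\prod_{j\ge 0}\frac{1+q^{\la_{n-j}}t^{j+x}}{1+t^{j+x}}
=\frac{1}{\prod_{j\ge 0}(1+t^{x+j})}\sum_{\ell=0}^{n}t^{x\ell}\,\E\, e_\ell(q^{\la_1}t^{n-1},\dots,q^{\la_n}),
\end{equation*}
the expectation being over the Macdonald measure built from $\mathcal S^\MM$ with parameters $(q,t)$. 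If $\mathcal S^\MM$ has no $\beta$'s, then by Remark~\ref{rm:independence} every $\E\, e_\ell(q^{\la_1}t^{n-1},\dots,q^{\la_n})$ is independent of $q$, so $F^{(q,t)}_n$ does not depend on $q$; in particular $F^{(q,t)}_n=F^{(t,t)}_n$.

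By Proposition~\ref{pr:column}, $\{\ell(\la)-n\}$ under $(q,t)$ is asymptotically equivalent to $\{F^{(q,t)}_n\}=\{F^{(t,t)}_n\}$, which is in turn asymptotically equivalent to $\{\ell(\la)-n\}$ under $(t,t)$; and for $q=t$ the Macdonald measure is exactly the classical Schur measure with the given $x$- and $\alpha$-parameters (cf.\ Example~\ref{ex:homog}), with no $t$-dependence left. Writing $\Xi_m$ for $\ell(\la)$ under this Schur measure (for the $m$-th specification), and noting that a deterministic shift by $n_m$ changes neither the ``spreads'' property nor the relevant suprema, we obtain: for every $t\in(0,1)$ and every $q\in[0,1)$, the length under $\mathcal S^\MM_m$ with parameters $(q,t)$ is asymptotically equivalent to $\{\Xi_m\}$. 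Transitivity of asymptotic equivalence (the ``spreads iff'' part chains, and $\sup_x(F-H)\le\sup_x(F-G)+\sup_x(G-H)$) then proves the first statement whenever $t_1,t_2>0$. The $t>0$ part of the second statement follows in the same way, now invoking the second half of Remark~\ref{rm:independence}: the observables $\E\, e_\ell(q^{\la_1}t^{n-1},\dots,q^{\la_n})$ are unchanged under $(q,\{\beta_j\})\mapsto(q^{1/k},\{\tilde\beta_j\})$, hence so is $F_n$, hence the lengths under $\mathcal S^\MM_m$ and $\widetilde{\mathcal S}^\MM_m$ are asymptotically equivalent.

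For $t=0$ I would repeat the argument with Proposition~\ref{pr:t0} in place of Proposition~\ref{pr:column}. There $\phi_{n,x}=\prod_{0\le j<-x}q^{\la_{n-j}}=e_K(q^{\la_{n-K+1}},\dots,q^{\la_n})$, where $K$ is the number of integers $j\ge 0$ with $j<-x$; this is precisely the coefficient of the lowest power $t^{\binom{K}{2}}$ of $e_K(q^{\la_1}t^{n-1},\dots,q^{\la_n})$ in the expansion around $t=0$ (every other monomial carries a strictly higher power of $t$ and a power of $q$ in $(0,1]$). Hence $F^{(q,0)}_n(x)=\lim_{t\to 0^+}t^{-\binom{K}{2}}\,\E_{(q,t)}e_K(q^{\la_1}t^{n-1},\dots,q^{\la_n})$; the inner expectation is $q$-independent for every $t\in(0,1)$ by Remark~\ref{rm:independence}, and the Macdonald weights converge in total variation as $t\to 0^+$ (they are continuous in $t$ and sum to $1$ for each $t$), so $F^{(q,0)}_n$ is again $q$-independent. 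Since the $q=t=0$ Macdonald measure is once more the Schur measure $\{\Xi_m\}$ (Hall--Littlewood at $t=0$ is Schur), this closes the remaining comparisons, and the first statement holds for all $(q_1,t_1),(q_2,t_2)\in[0,1)^2$; the $t=0$ instance of the second statement is identical.

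The step I expect to take the most care is exactly this $t=0$ case: Proposition~\ref{pr:column} and the manifest $q$-independence in Remark~\ref{rm:independence} are available (and transparent) only for $t>0$, so one must transport the $q$-independence of the observables across the degeneration $t\to 0^+$, which rests on a small but genuine continuity statement for the Macdonald measure in $t$ near $0$. Everything else is a direct assembly of Propositions~\ref{pr:column}, \ref{pr:t0} and Remark~\ref{rm:independence}, the one real idea being to route all comparisons through the $t$-independent Schur measure.
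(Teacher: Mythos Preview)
Your argument is essentially the paper's: both use Remark~\ref{rm:independence} to get $q$-independence of the observables $\E e_\ell(q^{\la_1}t^{n-1},\dots,q^{\la_n})$, pass to the generating function to obtain the $F_n$ of Proposition~\ref{pr:column} (or Proposition~\ref{pr:t0}), and route every comparison through the Schur measure. The one genuine difference is at $t=0$. The paper does not transport $q$-independence across a $t\to 0^+$ limit; instead it invokes \cite[Proposition~3.1.3]{BC}, which provides an integral representation for $\E q^{\la_n+\dots+\la_{n-\ell+1}}$ directly at $t=0$ (obtained precisely as the $t\to 0$ limit of \eqref{eq:macd-exp} after dividing by your $t^{\binom{K}{2}}=t^{\ell(\ell-1)/2}$) that is manifestly invariant under the $(q,\{\beta_j\})$-replacements of Remark~\ref{rm:independence}. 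Your continuity route is correct --- pointwise continuity of the Macdonald weights in $t$ together with the fact that they sum to $1$ yields total variation convergence by Scheff\'e, and your rescaled observable $t^{-\binom{K}{2}}e_K$ is uniformly bounded and converges pointwise --- but citing the ready-made formula from \cite{BC} bypasses exactly the step you flagged as requiring care.
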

\begin{proof} Let us start with the second part. As was mentioned in Remark \ref{rm:independence}, if the $t$-parameter is the same then the expectations of the form \eqref{eq:macd-exp} do not change under the above replacements. Taking a generating function of these averages for a fixed $n$ and dividing by $\prod_{j\ge 0} (1+\zeta t^j)$ leads to the observable in the 
right-hand side of \eqref{eq:match3}, which does not change as well. 
Now for $t>0$ we may substitute $\zeta=t^x$, which gives $\phi_{n,x}$ of Proposition \ref{pr:column}, and the statement directly follows from that proposition. For $t=0$ one needs first to take the limit of both sides of \eqref{eq:macd-exp} normalized by $t^{\ell(\ell-1)/2}$ as $t\to 0$. This is done in \cite[Proposition 3.1.3]{BC}, and the result is an integral representation for $\E q^{\la_n+\dots+\la_{n-\ell+1}}$ that is also independent of the $(q,\beta)$-replacements in our hypothesis. Applying Proposition \ref{pr:t0} we obtain the result for $t=0$.

Let us proceed to the first part. In the absence of the $\beta$-parameters, all the expectations we just discussed are actually independent of $q$. Thus, we can change $q$ freely ($t$ is so far fixed), and the resulting sequences of Macdonald measures will have asymptotically equivalent $\ell(\la)$. Then we can choose $q=t$, when the Macdonald measures turn into the similarly specialized Schur measures that are actually independent of $q$ and $t$. Hence, for any initial pair $(q,t)\in [0,1)$, the random variable $\ell(\la)$ is asymptotically equivalent to the same random variable for the similarly specialized sequence of the Schur measures, and the result follows. 
\end{proof} 

\begin{remark} One instance of the asymptotic equivalence of Corollary 
\ref{cr:equiv-macd} is \cite[Theorem 1.3]{Dim}, cf. a discussion in \S1.4 
there. 
\end{remark}

Another application is an asymptotic equivalence of observables between the stochastic six vertex model and the Macdonald measures. 

\begin{corollary}\label{cr:equiv-6v-macd} Assume we are given two sequences of specifications $\{\mathcal S_m^{\vv}\}_{m\ge 1}$ and $\{\mathcal S^\MM_m\}_{m\ge 1}$, and assume that these specifications match for large enough $m$, cf. Definitions \ref{df:6v-spec}, \ref{df:mac-spec}, \ref{df:match}. Then the random variables $\HT(M,N)$ and $n-\ell(\la)$, defined for the vertex model and for the Macdonald measures, respectively, are asymptotically equivalent as $m\to\infty$.  
\end{corollary}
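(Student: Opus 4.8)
The plan is to run the matched identity \eqref{eq:match3} through the machinery of Section \ref{sc:equivalence}. A matching forces $Q=t$ and $N=n$ (Definition \ref{df:match}), so for every $m$ large enough that $\mathcal S_m^{\vv}$ and $\mathcal S_m^{\MM}$ match, substituting $\zeta=Q^{x}$ into \eqref{eq:match3} (legitimate for all $x\in\R$, since $Q^{x}>0\notin-Q^{\Z_{\le 0}}$) turns it into the pointwise identity of functions of $x$
$$
G_m(x):=\E_{\vv}\prod_{i\ge 0}\frac{1}{1+Q^{\HT(M,N)+i+x}}\;=\;\E_{\MM}\prod_{j\ge 0}\frac{1+q^{\la_{n-j}}t^{j+x}}{1+t^{j+x}}=:F_m(x)
$$
(recall $t=Q$). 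The point is that the right-hand side is exactly the Macdonald-side observable of Proposition \ref{pr:column}, the left-hand side is an instance of Example \ref{ex:function} on the six vertex side, and one then glues the two resulting asymptotic equivalences through the common function sequence $\{G_m\equiv F_m\}$.

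For the right-hand side: $F_m(x)$ coincides with the function $F_n(x)$ of Proposition \ref{pr:column} (here $n=N$ and $t=Q\in(0,1)$). Since the Macdonald measure in play is supported on $\{\la:\ell(\la)\le n\}$, and the hypotheses of Proposition \ref{pr:equiv} are uniform in the index (so Proposition \ref{pr:column} can be read along $m\to\infty$, with the Macdonald measures playing the role of the ``arbitrary'' distributions on the $\Omega_n$'s), we obtain that $\{\ell(\la)-n\}_{m}$ and $\{F_m(x)\}_{m}$ are asymptotically equivalent.

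For the left-hand side: set $\Phi(y):=\prod_{i\ge 0}(1+Q^{y+i})^{-1}$, which is continuous, strictly increasing, tends to $0$ as $y\to-\infty$ (bound by the $i=0$ factor) and to $1$ as $y\to+\infty$ (from $0\le1-\Phi(y)\le Q^{y}/(1-Q)$). Since $G_m(x)=\E_{\vv}\,\Phi\big(x-(-\HT(M,N))\big)$, Example \ref{ex:function} applied with the arbitrary random variables $\eta_m:=-\HT(M,N)$ gives that $\{-\HT(M,N)\}_{m}$ and $\{G_m(x)\}_{m}$ are asymptotically equivalent. Because $G_m\equiv F_m$, the sequences $\{-\HT(M,N)\}_{m}$ and $\{\ell(\la)-n\}_{m}$ are then asymptotically equivalent to one and the same sequence of non-decreasing functions, hence to each other (one spreads iff the common function sequence does, and the two c.d.f.'s stay uniformly close by the triangle inequality through $F_m$). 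Reflecting $x\mapsto-x$ finally converts this into the asymptotic equivalence of $\{\HT(M,N)\}_{m}$ and $\{n-\ell(\la)\}_{m}$, which is the claim.

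I do not expect a genuine obstacle; the work is entirely bookkeeping, and the places to be careful are: the elementary analytic facts about $\Phi$; the remark that the conditions of Proposition \ref{pr:equiv} are uniform in the index, so Proposition \ref{pr:column} and Example \ref{ex:function} may be invoked along the sequence $m\to\infty$ rather than over all of $\Z_{\ge 1}$; and the fact that asymptotic equivalence is a bona fide equivalence relation that is stable under the reflection $x\mapsto-x$ for discrete variables --- which holds because the ``spreads'' hypothesis forces $\sup_x\Prob\{\eta_m=x\}\to0$, so that $\Prob\{-\eta_m\le x\}$ and $1-\Prob\{\eta_m\le-x\}$ differ by a uniformly negligible amount.
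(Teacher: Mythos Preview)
Your proposal is correct and follows essentially the same route as the paper: substitute $\zeta=Q^x=t^x$ into \eqref{eq:match3}, invoke Proposition~\ref{pr:column} on the Macdonald side and Example~\ref{ex:function} with $\Phi(y)=\prod_{i\ge0}(1+Q^{y+i})^{-1}$ on the vertex-model side, then pass through the common function sequence and flip signs. You are simply more explicit than the paper about the transitivity of asymptotic equivalence and the sign reflection, which the paper's proof leaves as implicit one-liners.
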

\begin{remark} The index $m$ in the above statement is essentially a placeholder for some limit transition in the space of parameters of matching specializations, and it can be made continuous if needed. For example, in the next section we will send $M$ and $N$ to infinity 
with all other parameters being fixed. The statement of Corollary \ref{cr:equiv-6v-macd}  becomes meaningful only if the mentioned random variables spread under the limit transition.  
\end{remark}
\begin{proof}[Proof of Corollary \ref{cr:equiv-6v-macd}] This is a corollary of \eqref{eq:match3}. The asymptotic equivalence of the right-hand side with $\zeta=t^x$ and the random variable $\ell(\la)-n$ was just discussed in the proof of Corollary \ref{cr:equiv-macd}. The asymptotic equivalence of the left-hand side with $\zeta=Q^x$ and the random variable $-\HT(M,N)$ follows from Proposition \ref{pr:equiv} and Example \ref{ex:function}, because
$$
\prod_{i\ge 0} \frac{1}{1+\zeta Q^{\HT(M,N)+i}}= \prod_{i\ge 0} \frac{1}{1+Q^{x+\HT(M,N)+i}},
$$
and the function $\Phi(x)=\prod_{i\ge 0}(1+Q^{x+i})^{-1}$ fits the format of Example \ref{ex:function}. Using \eqref{eq:match3} and changing the signs of the observables yields the desired statement. 
\end{proof} 

\section{Tracy-Widom asymptotics for homogeneous vertex models}\label{sc:asymptotics}

The goal of this section is to derive the GUE Tracy-Widom asymptotics for height function of the homogeneous vertex models described in Example \ref{ex:homog} using the connection to the Macdonald (or rather Schur) measures. 

\begin{theorem}\cite[Theorems 1.1 and 1.2]{BCG}\label{th:6v} Consider the stochastic homogeneous six vertex model in the quadrant, that is, $s_x\equiv Q^{-1/2}$,  $\xi_i\equiv 1$, $u_i\equiv u>0$ in the notation of Section \ref{sc:6v}. Denote $\zeta=Q^{-1/2}u^{-1}$ and note that $0<\zeta<1$, cf. Example \ref{ex:homog}. 
Then for any $\mu,\nu>0$ we have the following convergence in probability:
$$
\lim_{L\to\infty} \frac{\HT(\mu L,\nu L)}{L}=\mathfrak H(\mu,\nu),
$$
where 
$$
\mathfrak H(\mu,\nu)=\begin{cases} \dfrac{(\sqrt{\nu}-\sqrt{\zeta\mu})^2}{1-\zeta},&\zeta\le {\mu}/{\nu}\le \zeta^{-1},\\
0,& \mu/\nu\ge \zeta^{-1},\\
\nu-\mu,&\mu/\nu\le \zeta.
\end{cases}
$$
Furthermore, for $\zeta< \mu/\nu< \zeta^{-1}$ we have
$$
\lim_{L\to\infty}\Prob\left\{\frac{\HT(\mu L,\nu L)-\mathfrak H(\mu,\nu)L}{\sigma_{\mu,\nu}L^{1/3}}\ge -x\right\}=F_{\textrm{GUE}}(x),
$$
where $F_{\textrm{GUE}}$ is the GUE Tracy-Widom distribution, and 
$$
\sigma_{\mu,\nu}=\frac{\left(\zeta\mu\nu\right)^{1/6}\left(1-\sqrt{\zeta\mu/\nu}\right)^{2/3}\left
(1-\sqrt{\zeta\nu/\mu}\right)^{2/3}}{1-\zeta}\,.
$$
\end{theorem}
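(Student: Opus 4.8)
The plan is to reduce Theorem \ref{th:6v} to a known asymptotic statement about a Schur measure, using the chain of identities built up in the preceding sections. By Example \ref{ex:homog}, the homogeneous six vertex model with $s_x\equiv Q^{-1/2}$, $\xi_x\equiv 1$, $u_y\equiv u$ matches (in the sense of Definition \ref{df:match}) a Macdonald measure with $t=Q$, $x_i\equiv x=u^{-1}$, $\alpha_j\equiv\alpha=t^{-1/2}$, and \emph{no} $\beta$-parameters; since there are no $\beta$'s we are free to set $q=t$, turning this into a Schur measure with $n=N$ variables and a single $\alpha=t^{-1/2}$ geometric progression specialization, whose dependence on $(q,t)$ disappears entirely. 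Concretely, this Schur measure is (up to the standard dictionary between Schur measures and last-passage percolation / Meixner-type ensembles) the Schur measure describing the geometric-weight last-passage percolation in an $N\times(M-1)$ rectangle with the column/row parameters dictated by the matching. So the first step is to write down explicitly which Schur measure arises, identify $\ell(\la)$ with the appropriate last-passage / height observable, and invoke Corollary \ref{cr:equiv-6v-macd}: the random variables $\HT(M,N)$ and $n-\ell(\la)$ are asymptotically equivalent as $M,N\to\infty$.

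Next I would take $M=\mu L$, $N=\nu L$ and send $L\to\infty$. The asymptotics of $\ell(\la)$ for a Schur measure of this type is classical: one uses the determinantal structure of the Schur measure (the associated point process $\{\la_j-j\}$ is determinantal with a correlation kernel of discrete Hankel / Meixner type), and the length $\ell(\la)$ equals (shifted by $n$) the position of the rightmost particle, or equivalently the rightmost gap, of this process. Standard saddle-point analysis of the double-contour-integral kernel — the same analysis reviewed in \cite{BG-lectures} — gives: (i) a law of large numbers, $\ell(\la)/L\to$ some explicit $g(\mu,\nu)$, with the three regimes (a genuine edge, a frozen region where $\ell(\la)=n$, and the opposite frozen/linear region) corresponding exactly to $\zeta\le\mu/\nu\le\zeta^{-1}$, $\mu/\nu\ge\zeta^{-1}$, $\mu/\nu\le\zeta$; and (ii) in the edge regime $\zeta<\mu/\nu<\zeta^{-1}$, GUE Tracy–Widom fluctuations on the $L^{1/3}$ scale with an explicit variance constant. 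Matching $n-g(\mu,\nu)L$ with $\mathfrak H(\mu,\nu)L$ and the Tracy–Widom scaling constant with $\sigma_{\mu,\nu}$ is then a direct, if slightly tedious, computation; the convergence in probability in the frozen regimes and on the LLN scale follows because asymptotic equivalence transports the (degenerate) limiting distribution.

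The one genuine subtlety is that asymptotic equivalence in the sense of Definition \ref{df:equiv} only transfers distributional information when the relevant random variables \emph{spread} (Definition \ref{df:spread}); in the frozen regions $\HT$ and $\ell(\la)$ do not spread, so there Corollary \ref{cr:equiv-6v-macd} gives nothing and one must argue convergence in probability directly — but this is easy, e.g. by a crude deterministic bound (in the region $\mu/\nu\ge\zeta^{-1}$ one shows $\HT(M,N)=0$ with probability tending to $1$ because the density of paths reaching that far is exponentially small, and in $\mu/\nu\le\zeta$ essentially every path through the left boundary already lies to the right, forcing $\HT\approx N-M$), or by a separate, elementary large-deviation estimate. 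In the edge regime, where all the content lies, one needs the random variable $(\HT(\mu L,\nu L)-\mathfrak H L)/(\sigma L^{1/3})$ to spread, which it does since its limit has a continuous CDF; hence the GUE limit for the Schur side transfers verbatim to $\HT$ after the sign flip in Corollary \ref{cr:equiv-6v-macd}.

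The main obstacle I anticipate is \emph{not} the six vertex side — that is handled cleanly by the matching and Corollary \ref{cr:equiv-6v-macd} — but rather bookkeeping the exact Schur measure and carrying out the steepest-descent analysis of its kernel with enough uniformity to pin down all three regimes and the precise constants $\mathfrak H(\mu,\nu)$ and $\sigma_{\mu,\nu}$. In particular one must be careful that the relevant observable of the Schur measure is the \emph{length} $\ell(\la)$ (equivalently $\min$ of the shifted particle configuration, i.e.\ the left edge / left-most gap), which is the ``wrong end'' compared to the more familiar $\la_1$ asymptotics, so one should either apply a reflection/duality on the Schur measure or redo the saddle analysis at the left edge; the GUE (as opposed to, say, BBP) universality there relies on the single $\alpha$-specialization being a simple geometric progression, i.e.\ there being no additional finite-rank perturbation — which is exactly why the homogeneous case, and not a more general inhomogeneous one, produces clean GUE asymptotics.
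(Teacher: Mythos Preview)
Your proposal is correct and follows essentially the same route as the paper: match to a Schur measure via Example \ref{ex:homog}, invoke Corollary \ref{cr:equiv-6v-macd} to trade $\HT(M,N)$ for $n-\ell(\la)$, then analyze $\ell(\la)$ through the determinantal structure and a steepest-descent computation of the double-contour kernel at its left edge (the paper does this via Kerov's complementation $\tilde K=\mathbf 1-K$, which is exactly the ``left-most gap'' idea you mention). The only noteworthy difference is in the frozen regions: the paper does not argue there directly on the vertex model as you suggest, but instead uses the deterministic monotonicity bounds $0\le\HT(M,N)\le N$ and $0\le\HT(M_2,N)-\HT(M_1,N)\le M_1-M_2$ together with the fact that $\mathfrak H$ attains the extremal values $0$ and $\nu-\mu$ at the boundaries of the curved region, so the LLN in the interior forces the LLN outside.
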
 
\begin{proof} It suffices to consider the case of $\zeta<\mu/\nu<\zeta^{-1}$ because the freezing of the random path configuration outside this region follows from the following obvious properties of the height function
$$
0\le \HT(M_2,N)- \HT(M_1,N)\le M_1-M_2 \quad \textrm{for}\quad M_1\ge M_2, \qquad 0\le \HT(M,N)\le N,
$$ 
and from the fact that its law of large numbers $\mathfrak H(\mu,\nu)$ converges to the minimal and maximal possible values of 0 and $\nu-\mu$ at the edges of the region. 
Thus, from now on we will assume that $\zeta<\mu/\nu<\zeta^{-1}$. 

We will rely on Corollary \ref{cr:equiv-6v-macd} and instead prove a similar limiting statement for the length $\ell(\la)$ of the random Young diagram distributed according to the corresponding Macdonald measure. Since the variance of this random variable will tend to $\infty$, and the limiting distribution function $F_{GUE}(s)$ is continuous, we will conclude that $\ell(\la)$ spreads, and hence by Corollary \ref{cr:equiv-6v-macd} we will have the same convergence for $\HT(M,N)$. 

The matching specification of the Macdonald measure is described in Example \ref{ex:homog}: For $\HT(M,N)$ we can consider the Schur measure with 
$$
\Prob\{\la\}=const\cdot s_\la(\underbrace{\zeta u,\dots,\zeta u}_{M-1})s_\la(\underbrace{u^{-1},\dots,u^{-1}}_{N})=const\cdot s_\la(1^{M-1}) s_\la(1^N)\zeta^{|\la|},
$$
where the last equality is due to homogeneity of the Schur polynomials and the fact that $\deg s_\la=|\la|$. 

Asymptotic analysis of the Schur measures is a very well developed subject, see e.g. \cite{BG-lectures} and references therein. The key fact is that for the random partition $\la$, the random point configuration $\{\la_i-i\}_{i=1}^\infty\subset\Z$ generates a \emph{determinantal point process} (see e.g. \cite{B-det} and references therein for the general information on the latter). For a generic Schur measure this was first proved in \cite{Ok-schur}, where a convenient double contour integral formula for the corresponding correlation kernel was also derived. The particular case of the Schur measures above was actually considered a bit earlier in \cite{Joh-shape} and \cite{BO-hyper}, where they were also identified as the orthogonal polynomial ensembles associated with the Meixner classical orthogonal polynomials. 

The double contour integral formula of \cite{Ok-schur} for the correlation kernel describing the random configuration $X(\la):=\{\la_i-i\}_{i\ge 1}$ in our case above takes the form
\begin{equation}\label{eq:kernel}
K(x,y)=\frac{1}{(2\pi\i)^2}\oint\oint \frac{(1-\sqrt{\zeta}z^{-1})^N}{(1-\sqrt{\zeta}z)^{M-1}} \frac{(1-\sqrt{\zeta}w)^{M-1}}{(1-\sqrt{\zeta}w^{-1})^{N}} \frac{dzdw}{(z-w)z^{x+1}w^{-y}}
\end{equation}
with $x,y\in\Z$, and the integration contours being positively oriented circles satisfying $\zeta^{-1}>|z|=r_1>1>r_2=|w|>\zeta$.

We are interested in the behavior of $\ell(\la)$, and it is easy to see that $-\ell(\la)$ is the leftmost particle of the complementary point configuration $Y(\la):=\Z\setminus X(\la)$. Kerov's complementation principle for determinantal point processes, see \cite[A.3]{BOO}, states that $Y(\la)$ also generates a determinantal point process with the correlation kernel $\tilde K(x,y):=\mathbf{1}_{x=y}-K(x,y)$. Noting that $\Res_{z=w}$ of the integrand is exactly $\mathbf{1}_{x=y}$, we see that $\tilde K(x,y)$ is given by the same integral with interchanged contours, and with the minus sign in front (or with $(z-w)$ replaced by $(w-z)$). 

The inclusion-exclusion principle allows one to identify the \emph{gap 
probabilities} (equivalently the probabilities of not having any particles in 
a subset called ``gap'') for a determinantal point process as Fredholm 
determinant expansions for $\mathbf{1}$ minus the correlation kernel restricted 
to the gap. 
In our case, this means that $\Prob\{-\ell(\la)>x\}=\det(\mathbf{1}-\tilde K)_{\ell^2(x,x-1,x-2,\dots)}$. 

Finally, we need to perform asymptotic analysis of the kernel $\tilde K$ to see 
what the above Fredholm determinant converges to. Double contour integral 
representations provide a very convenient tool for such an analysis; this was 
first done in \cite{OkR}, \cite{Ok2003}. The reason is that the part of the 
integrand that depends on the large parameter $L$ can be written in the form
$\exp(L(G(z)-G(w))$, where in our case, cf. \eqref{eq:kernel},
$$
G(z)=-\mu\ln\left(1-\sqrt{\zeta}z\right)+\eta\ln\left(1-\sqrt{\zeta}z^{-1}
\right)-\frac xL\ln z.
$$
One can then try to deform the integration contours to the domains where $\Re 
G(z)<0$ and $\Re G(w)>0$, which would lead to a fast decay of the integral. 
Along the way the contours may need to cross or to come close to a common point;
in the first case the limit of the kernel is the residue at $z=w$ integrated 
over the parts of the contours that crossed, while in the second case the 
limiting behavior is determined from an infinitesimal neighborhood of the 
common point. 

The endpoints of the integration arc for the residue at $z=w$ 
end up being the saddle points of $\Re G(z)$, which are the critical points of 
$G(z)$. A common point for the contours arises when such critical points 
merge, and it is a double critical point of $G(z)$. The first case corresponds
to the values of $x$ near which the density of points in our point process is 
strictly between 0 and 1, the so-called bulk of the point process, while the 
second case corresponds to the edges of the bulk. 

In addition, if the contours can be deformed to the desired domains without 
getting close, the kernel tends to zero, and we would see no particles near 
such a location $x$, while if the deformation requires the residue at $z=w$ to
be taken on the whole closed contour, the kernel tends to $\mathbf{1}$, and 
almost all locations near such an $x$ are occupied by particles with high 
probability. 

This strategy has been worked out in dozens of papers and is completely 
standard by now. One could e.g. look at \cite{OkR-skew}, 
\cite{BF2008} for detailed examples. We will thus omit 
the usual arguments that the contours can be deformed to the needed positions 
(they are rather similar to the above references) and will focus instead on the 
critical point computation that will provide us with the final answer.

Since we are interested in the leftmost particle of the random point 
configuration $Y(\la)$, we need to investigate the edges of the system. 
Looking for values of $x\in\R$ that would lead to double critical points of 
$G(z)$ yields two values $x=(\sqrt{\zeta\mu}\pm\sqrt{\nu})^2/(1-\zeta)-\nu$. These 
correspond to the two edges of the bulk consisting of a single interval; since
we need the left edge we pick the smaller value, call it $x_c$. The corresponding double 
critical point is at
$$
z_c=\frac{\sqrt{\zeta\mu}-\sqrt{\nu}}{\sqrt{\mu}-\sqrt{\zeta\nu}}\,.
$$
We set 
$$
\sigma_{\mu,\nu}=-z_c\left(\frac{G'''(z_c)}{2}\right)^{1/3}=\frac{\left(\zeta\mu\nu\right)^{1/6}\left(1-\sqrt{\zeta\mu/\nu}\right)^{2/3}\left
(1-\sqrt{\zeta\nu/\mu}\right)^{2/3}}{1-\zeta}
$$
and observe that the substitution 
$$
x=x_c L-\sigma_{\mu,\nu} L^{1/3}\tilde x,\qquad
z=z_c\left(1+\frac{L^{-1/3}}{\sigma_{\mu,\nu}}\tilde z\right)  
$$
leads to
$$
G(z)=G(z_c)L+\ln(z_c)\sigma_{\mu,\nu}L^{2/3}\tilde x-\frac{\tilde z^3}{3}+\tilde x\tilde z. 
$$
Making a similar substitution for the second integration variable $w$, we conclude that
\begin{equation}
\label{eq:kernel-asympt}
\lim_{L\to\infty}\frac{e^{\ln(z_c)\sigma_{\mu,\nu}L^{2/3}\tilde y}}{e^{\ln(z_c)\sigma_{\mu,\nu}L^{2/3}\tilde x}}\cdot\sigma_{\mu,\nu}L^{1/3}\cdot\tilde K\left(x_cL-\sigma_{\mu,\nu} L^{1/3}\tilde x,x_cL-\sigma_{\mu,\nu} L^{1/3}\tilde y\right)
=K_{Airy}(\tilde x,\tilde y),
\end{equation}
where
$$
K_{Airy}(\tilde x,\tilde y)=\frac{1}{(2\pi\i)}\iint e^{{\tilde w^3}/{3}-{\tilde z^3}/{3}-\tilde w\tilde y+\tilde z\tilde x}\, \frac{d\tilde zd\tilde w}{\tilde w-\tilde z}=\frac{\textrm{Ai}(\tilde x)\textrm{Ai}'(\tilde y)-\textrm{Ai}'(\tilde x)\textrm{Ai}(\tilde y)}{\tilde x-\tilde y}
$$
is the Airy kernel. Here the $\tilde z$-contour goes from $ e^{4\pi i/3}\infty$ to $e^{2\pi i/3}\infty$ and the $\tilde w$-contour from $e^{5\pi i/3}\infty$ to $e^{\pi i/3}\infty$ so that the contours do not intersect, and $\textrm{Ai}(\,\cdot\,)$ is the Airy function. Note that the first prefactor of $\tilde K$ in \eqref{eq:kernel-asympt} plays no role as it does not affect $\det(1-\tilde K)$, and the second prefactor $\sigma_{\mu,\nu}L^{1/3}$ is responsible for the change of scale in the space where the point configurations live. 

Such a contour deformation argument proves that the limiting relation \eqref{eq:kernel-asympt} holds uniformly in $\tilde x$ and $\tilde y$ varying over a compact set in $\R$, and one needs a bit more to prove that $\det(\mathbf{1}-\tilde K)_{\ell^2(x,x-1,\dots)}$ converges to the GUE Tracy-Widom distribution 
$F_{GUE}(\tilde x)=\det(\mathbf{1}-K_{Airy})_{L^2(\tilde x,+\infty)}$ (see \cite{TW} for the latter). 
A straightforward approach consists in proving that the corresponding Fredholm determinant expansions converge, but this requires careful tail estimates of the contour integrals. In our particular case the situation is simpler, because our kernel $\tilde K=\textbf{1}-K$ becomes self-adjoint after a ``gauge transformation'' of the form $\tilde K(x,y)\mapsto f(x)\tilde K(x,y)/f(y)$ for an appropriate function $f$ (the first factor in \eqref{eq:kernel-asympt} is a remnant of such a conjugation). This follows from the fact that $K$ has the same property, and the corresponding self-adjoint kernel is the Christoffel-Darboux kernel for the Meixner orthogonal polynomials, see \cite{BO-Meixner} for details. For self-adjoint kernels the convergence of determinants can be reduced to the uniform convergence of kernels on compact sets plus the convergence of traces, see \cite[A.4]{BOO}. But the trace of $\tilde K$ can be computed explicitly by summing a geometric series in the integrand of \eqref{eq:kernel-asympt}, and 
the convergence to the corresponding quantity for the Airy kernel immediately follows from the same contour deformation argument.
\end{proof}

\begin{remark} The condition $\zeta<\mu/\nu<\zeta^{-1}$ that we imposed in the beginning of the proof above, is necessary to ensure that the edge point $x_cL$ indeed captures the behavior of $\ell(\la)$. By similar contour deformations one can show that if the above equalities are not satisfied, the leftmost point of $Y(\la)=\Z\setminus\{\la_i-i\}_{i\ge 1}$ is actually at $-\ell(\la)=-\min(M-1,N)$ with high probability, and the particle density of $Y(\la)$ in $[-\min(M-1,N),x_cL]$ is close to 1. which corresponds to the Young diagram of $\la$ developing a flat part of the boundary thanks to $\la_{\ell(\la)}\sim \min(M-1,N)-x_cL\to\infty$. 
\end{remark}

\begin{theorem} Consider the stochastic homogeneous higher spin six vertex 
model in the quadrant with parameters $s_x\equiv -Q^{1/2}$,  $\xi_i\equiv 1$, 
$u_i\equiv u>0$ in the notation of Section \ref{sc:6v}. Denote 
$\zeta=Q^{-1/2}u^{-1}$. 
Then for any $\mu,\nu>0$ we have the following convergence in probability
$$
\lim_{L\to\infty} \frac{\HT(\mu L,\nu L)}{L}=\mathfrak H(\mu,\nu),
$$
where 
$$
\mathfrak H(\mu,\nu)=\begin{cases} \dfrac{(\sqrt{\nu}-\sqrt{\zeta\mu})^2}{1+\zeta},& 0<{\mu}/{\nu}\le \zeta^{-1},\\
0,& \mu/\nu\ge \zeta^{-1}.
\end{cases}
$$
Furthermore, for $\mu/\nu< \zeta^{-1}$ we have
$$
\lim_{L\to\infty}\Prob\left\{\frac{\HT(\mu L,\nu L)-\mathfrak H(\mu,\nu)L}{\sigma_{\mu,\nu}L^{1/3}}\ge -x\right\}=F_{\textrm{GUE}}(x),
$$
where $F_{\textrm{GUE}}$ is the GUE Tracy-Widom distribution, and 
$$
\sigma_{\mu,\nu}=\frac{\left(\zeta\mu\nu\right)^{1/6}\left(1-\sqrt{\zeta\mu/\nu}\right)^{2/3}\left
(1+\sqrt{\zeta\nu/\mu}\right)^{2/3}}{1+\zeta}\,.
$$
\end{theorem}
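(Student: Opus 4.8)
The plan is to repeat the proof of Theorem \ref{th:6v} almost verbatim, tracking the changes caused by replacing the ``positive'' column parameter $s_x\equiv Q^{-1/2}$ by the ``negative'' one $s_x\equiv -Q^{1/2}$. As there, it suffices to treat $0<\mu/\nu<\zeta^{-1}$: for $\mu/\nu\ge\zeta^{-1}$ one combines $\HT\ge 0$ with the monotonicity $\HT(M_1,N)\le\HT(M_2,N)$ for $M_1\ge M_2$ and the fact that $\mathfrak H(\mu_0,\nu)\to 0$ as $\mu_0/\nu\uparrow\zeta^{-1}$ to sandwich $\HT(\mu L,\nu L)/L\to 0$. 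Unlike Theorem \ref{th:6v} there is no lower freezing threshold $\mu/\nu\le\zeta$: we will see the left edge stays in the Airy regime throughout $0<\mu/\nu<\zeta^{-1}$, the reason being that the partition below lives in a box and so cannot develop a flat vertical boundary part. By Corollary \ref{cr:equiv-6v-macd} it then suffices to prove the corresponding limit theorem for $\ell(\la)$ under the matching Macdonald measure, which by the second part of Example \ref{ex:homog} (note $s=-Q^{1/2}$ forces $q=t=Q$) is the Schur measure
$$
\Prob\{\la\}=\mathrm{const}\cdot s_\la(u^{-1},\dots,u^{-1})\,s_{\la'}(Q^{-1/2},\dots,Q^{-1/2})=\mathrm{const}\cdot s_\la(1^N)\,s_{\la'}(1^{M-1})\,\zeta^{|\la|},
$$
with $N$ variables in the first factor and $M-1$ in the conjugated second one; such $\la$ lie in the $N\times(M-1)$ box, and this is precisely the Krawtchouk (binomial-weight) discrete orthogonal polynomial ensemble.

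By \cite{Ok-schur} the point configuration $X(\la)=\{\la_i-i\}_{i\ge1}$ is determinantal with correlation kernel given by an Okounkov double contour integral. Passing from $s_\la$ to $s_{\la'}$ replaces $(1-av)^{-1}$ by $(1+av)$ in the relevant generating functions, so compared with \eqref{eq:kernel} the only change is that the columns now contribute $(1+\sqrt\zeta z)^{M-1}$ and $(1+\sqrt\zeta w)^{-(M-1)}$ in place of $(1-\sqrt\zeta z)^{-(M-1)}$ and $(1-\sqrt\zeta w)^{M-1}$, the $N$-dependent factors being unchanged. As before, $-\ell(\la)$ is the leftmost particle of the complementary process $Y(\la)=\Z\setminus X(\la)$, which by Kerov's complementation principle is determinantal with kernel $\mathbf 1_{x=y}-K(x,y)$, so that $\Prob\{-\ell(\la)>x\}=\det(\mathbf 1-\widetilde K)_{\ell^2(x,x-1,\dots)}$.

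The saddle-point analysis proceeds as around \eqref{eq:kernel-asympt}, the $L$-dependent part of the integrand now being $\exp(L(G(z)-G(w)))$ with
$$
G(z)=\mu\ln\bigl(1+\sqrt\zeta z\bigr)+\nu\ln\bigl(1-\sqrt\zeta z^{-1}\bigr)-\tfrac xL\ln z,
$$
i.e.\ the column singularity has moved from $z=1/\sqrt\zeta$ to $z=-1/\sqrt\zeta$. Solving $G'(z)=G''(z)=0$ for the left edge gives $z_c=(\sqrt{\zeta\mu}-\sqrt\nu)/(\sqrt\mu+\sqrt{\zeta\nu})$ and edge location $x_c=\mathfrak H(\mu,\nu)-\nu$ with $\mathfrak H(\mu,\nu)=(\sqrt\nu-\sqrt{\zeta\mu})^2/(1+\zeta)$; then $\sigma_{\mu,\nu}=-z_c(G'''(z_c)/2)^{1/3}$ evaluates to the expression in the statement, with $(1+\zeta)$ and $(1+\sqrt{\zeta\nu/\mu})^{2/3}$ in place of $(1-\zeta)$ and $(1-\sqrt{\zeta\nu/\mu})^{2/3}$. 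The standard contour deformation (omitted, as in Theorem \ref{th:6v}) then yields, under $x=x_cL-\sigma_{\mu,\nu}L^{1/3}\tilde x$ and the analogous scaling of $z,w$, the convergence of $\widetilde K$ to the Airy kernel uniformly on compact sets.

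Finally, to upgrade this to convergence of the gap probability to $F_{\textrm{GUE}}(\tilde x)=\det(\mathbf 1-K_{\mathrm{Airy}})_{L^2(\tilde x,+\infty)}$, one argues as in Theorem \ref{th:6v}: $K$ is conjugate to the self-adjoint Christoffel--Darboux kernel of the relevant orthogonal polynomials, here the Krawtchouk polynomials rather than the Meixner ones, so by \cite[A.4]{BOO} it suffices to combine the uniform convergence on compacts with convergence of the traces, the latter obtained by summing a geometric series in the integrand exactly as before. Translating back through Corollary \ref{cr:equiv-6v-macd} (and noting that, the limit being continuous, $\ell(\la)$ and hence $\HT(M,N)$ spreads) yields the law of large numbers and the Tracy--Widom statement for $\HT(\mu L,\nu L)$. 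I expect the only genuinely non-routine point to be this last step: verifying that the Krawtchouk discrete orthogonal polynomial ensemble supplies exactly the self-adjoint Christoffel--Darboux form, the compact uniform kernel convergence after the Airy rescaling, and the trace convergence needed for \cite[A.4]{BOO}; the critical-point computation and the contour deformations are routine and completely parallel to Theorem \ref{th:6v}.
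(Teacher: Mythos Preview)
Your proposal is correct and follows essentially the same route as the paper's own proof: reduce to the matching Schur measure via Corollary \ref{cr:equiv-6v-macd} and Example \ref{ex:homog}, identify it as the Krawtchouk ensemble, write down the modified $G(z)$ with the column pole moved to $-1/\sqrt\zeta$, locate the left-edge double critical point, and invoke the self-adjoint Christoffel--Darboux structure plus \cite[A.4]{BOO} for the trace argument. Your formula $z_c=(\sqrt{\zeta\mu}-\sqrt\nu)/(\sqrt\mu+\sqrt{\zeta\nu})$ in fact corrects what appears to be a typo in the paper (which has $\mu$ rather than $\sqrt\mu$ in the denominator).
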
 
\begin{proof} The argument very closely follows the proof of Theorem \ref{th:6v} so we will only point out the differences. The corresponding Schur measure now has the form
$$
\Prob\{\la\}=const\cdot s_{\la'}(1^{M-1}) s_\la(1^N)\zeta^{|\la|},
$$
where $\la'$ is the dual partition to $\la$ (their Young diagrams are transposed to each other). The correlation kernel has the form
\begin{equation*}
K(x,y)=\frac{1}{(2\pi\i)^2}\oint\oint \frac{(1+\sqrt{\zeta}z)^{M-1}(1-\sqrt{\zeta}z^{-1})^N}{(1+\sqrt{\zeta}w)^{M-1}(1-\sqrt{\zeta}w^{-1})^{N}} \frac{dzdw}{(z-w)z^{x+1}w^{-y}},
\end{equation*} 
and it is related to the Christoffel-Darboux kernel of the Krawtchouk classical orthogonal polynomials. The needed asymptotic analysis of this kernel has actually been done in \cite{Joh-Ann}, \cite{Joh-Aztec}, but it is simpler for us to follow the same line of reasoning rather than to match the notation. The new function $G(z)$ has the form
$$
G(z)=\mu\ln\left(1+\sqrt{\zeta}z\right)+\eta\ln\left(1-\sqrt{\zeta}z^{-1}
\right)-\frac xL\ln z,
$$
and the needed lower edge and the corresponding double critical point are
$$
x_c=(\sqrt{\nu}-\sqrt{\zeta\mu})^2/(1+\zeta)-\nu,\qquad z_c=\frac{\sqrt{\zeta\mu}-\sqrt{\nu}}{\mu+\sqrt{\zeta\nu}}\,. 
$$
The fluctuation scale $\sigma_{\mu,\nu}$ is now given by 
$$
\sigma_{\mu,\nu}=-z_c\left(\frac{G'''(z_c)}{2}\right)^{1/3}=\frac{\left(\zeta\mu\nu\right)^{1/6}\left(1-\sqrt{\zeta\mu/\nu}\right)^{2/3}\left
(1+\sqrt{\zeta\nu/\mu}\right)^{2/3}}{1+\zeta},
$$ 
and the rest of the proof is exactly the same. 
\end{proof}

\end{document}